\def\CC{{\bf C}}
\def\RR{{\bf R}}
\def\NN{{\bf N}}
\def\ZZ{{\bf Z}}
\def\QQ{{\bf Q}}
\def\numP{{\bf \#P}}
\def\cA{{\cal A}}
\def\cC{\mathcal{C}}
\def\cX{\mathcal{X}}
\def\cZ{\mathcal{Z}}
\def\cY{\mathcal{Y}}
\def\cI{\mathcal{I}}
\def\cJ{\mathcal{J}}
\def\cH{\mathcal{H}}
\def\cP{\mathcal{P}}
\newcommand{\Clow}{\mathop{\overline{\mathcal{C}}} }
\def\sF{\mathcal{F}}
\def\sG{\mathcal{G}}
\def\ff{\mbox{\boldmath $f$}}
\def\gvec{\mbox{\boldmath $g$}}
\def\vv{\mbox{\boldmath $v$}}
\def\uu{\mbox{\boldmath $u$}}
\def\xx{x}
\def\Mat{\mathrm{Mat}}
\def\Sym{\mathrm{Sym}}
\def\Psd{\mathrm{Psd}}
\numberwithin{equation}{section}
\newcommand{\dc}{\mathop{\rm dc} }
\newcommand{\brank}{\mathop{\text{{\rm b-rank}}}}
\newcommand{\perm}{\mathop{\rm perm} }
\newcommand{\rank}{\mathop{\rm rank} }
\newcommand{\conv}{\mathop{\rm conv} }
\newcommand{\tr}{\mathop{\rm tr} }
\newcommand{\prob}{\mathop{\rm prb} }
\newcommand{\sign}{\mathop{\rm sign}}
\newcommand{\minrank}{\mathop{\underline{\text{{\rm rank}}}}}
\theoremstyle{plain}
\newtheorem{thm}{Theorem}[section]
\newtheorem{lem}[thm]{Lemma}
\newtheorem{prop}[thm]{Proposition}
\newtheorem{cor}[thm]{Corollary}
\newtheorem{cnj}[thm]{Conjecture}
\newtheorem{fct}[thm]{Fact}
\theoremstyle{definition}
\newtheorem{df}[thm]{Definition}
\title{Bi-polynomial rank and determinantal complexity}
\author{Akihiro Yabe}
\affil{Graduate School of Information Science and Technology\\ 
The University of Tokyo\\
akihiro\_yabe@mist.i.u-tokyo.ac.jp
}
\date{March 31, 2015}
\begin{document}
\maketitle
\begin{abstract}
The permanent vs. determinant problem is one of the most
important problems in theoretical computer science,
and is the main target of geometric complexity theory proposed by 
Mulmuley and Sohoni.
The current best lower bound for the
determinantal complexity of the $d$ by $d$ permanent polynomial
is $d^2/2$, due to Mignon and Ressayre in 2004.
Inspired by their proof method,
we introduce a natural rank concept of polynomials, called 
the bi-polynomial rank.
The bi-polynomial rank is related to 
width of an arithmetic branching program.
We prove that the bi-polynomial rank gives a lower bound of the determinantal complexity.
As a consequence, the above Mignon and Ressayre bound is improved 
to $(d-1)^2 + 1$ over the field of reals.
We show that the computation of the bi-polynomial rank is formulated as a rank minimization problem.
We propose a computational approach for
giving a lower bound of this rank minimization,
via techniques of the concave minimization.
This also yields a new strategy to attack the permanent vs. determinant problem.
\end{abstract}
\section{Introduction}\label{sec_intro}
The determinant $\det(A)$ and the permanent $\perm(A)$ of
a square matrix $A = (a_{i,j})$ of size $d$ are defined by
\begin{align*}
	{\det} (A) &:= \sum_{\sigma \in \mathfrak{S}_d} \sign(\sigma) \prod_{i=1}^d a_{i, \sigma(i)},\\
	{\perm} (A) &:= \sum_{\sigma \in \mathfrak{S}_d} \prod_{i=1}^d a_{i, \sigma(i)},
\end{align*}
where $\mathfrak{S}_d$ is the set of permutations on $\{1,2,\dots,d\}$.
Determinant is a representative function which admits efficient computation only with arithmetic operations.
On the other hand, such an efficient computation for permanent is not known.
Valiant~\cite{Valiant1979Tco} proved that the computation of permanent of $0$-$1$ matrices is $\numP$-complete.
Therefore, in contrast to determinant, 
it is conjectured that permanent cannot be computed in polynomial time.

The determinantal complexity is a measure for the difficulty of evaluation of polynomials.
Let $K$ be a field, 
and let $K[x] = K[x_1,x_2,\dots,x_D]$ denote
the set of polynomials of variables $x_1,x_2,\dots,x_D$ with coefficients in $K$.
By an {\em affine polynomial matrix}, 
we mean a matrix each of whose entries is an affine polynomial
(a linear polynomial including a constant term).
\begin{df}[see~\cite{Mignon2004Aqb}]
The {\em determinantal complexity} $\dc(p)$ of $p \in K[x]$ is defined as the minimum number $n$ such that there exists an affine polynomial matrix $Q \in (K[x])^{n \times n}$ satisfying
\[
p = \det(Q).
\]
\end{df}
It is known in~\cite{Valiant1979Cci, Valiant1983Fpc} that 
if a polynomial $p$ can be evaluated with number $m$ of arithmetics, then the determinantal complexity $\dc(p)$ is $O(m^{c \log m})$ for some constant $c$.

Permanent is regarded as a polynomial of matrix entries.
Let $\perm_d$ denote the permanent polynomial for $D = d \times d$ variables of matrix entries.
If $\dc(\perm_d) =  d^{ \omega(\log d) } $ over $K$, then permanent cannot be computed by polynomial number of arithmetics on $K$.
The following is one of the main conjectures in algebraic complexity theory (see~\cite{Burgisser1997Alt}).
\begin{cnj}
Over a field $K$ of characteristic not equal to two, it holds that
\[
\dc({\perm}_d) =  d^{ \omega(\log d) } .
\]
\end{cnj}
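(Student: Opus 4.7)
The plan is to attack this central open conjecture indirectly, through the intermediate complexity measure announced in the abstract: the bi-polynomial rank $\brank$. The first step is to establish the linchpin inequality $\brank(p) \leq f(\dc(p))$ for some (ideally linear or polynomial) function $f$, so that a super-polynomial lower bound on $\brank(\perm_d)$ translates into a super-polynomial lower bound on $\dc(\perm_d)$. To prove such an inequality, I would start from a hypothetical affine polynomial matrix $Q$ of size $n = \dc(p)$ with $\det(Q) = p$, and then unfold the Leibniz expansion of $\det(Q)$ into a structure whose width matches an arithmetic branching program. This is the natural path suggested by the stated connection between $\brank$ and ABP width.

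The second step is to reformulate $\brank(\perm_d)$ as a rank minimization problem over an explicitly describable feasible set of affine polynomial matrices (or pairs of matrices) whose product realizes $\perm_d$. The symmetric group $\mathfrak{S}_d \times \mathfrak{S}_d$ acts on this feasible set through row and column permutations of the permanent, and one should be able to restrict attention (by a symmetrization / averaging argument) to feasible solutions that respect this symmetry. This is the same structural reduction that underlies the geometric complexity theory program, and it should make the feasible region combinatorially tractable enough for further analysis.

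The third step is the concave-minimization attack. Rank is a lower-semicontinuous function that is concave along many natural directions on matrix spaces (and admits convex/concave relaxations such as nuclear norm and its dual). Concave minima are attained at extreme points of the feasible convex hull, so the strategy is to (i) enumerate, or at least lower-bound the number and complexity of, the extreme points of the symmetrized feasible set, and (ii) show that no such extreme point yields a rank smaller than $d^{\omega(\log d)}$. Ideally, one would combine the symmetry reduction with a global obstruction (a non-local algebraic invariant of $\perm_d$) that rules out low-rank extreme points.

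The hard part — and the reason this is a conjecture rather than a theorem — is step three. All existing determinantal-complexity lower bounds for $\perm_d$, including the Mignon--Ressayre $d^2/2$ bound and the $(d-1)^2+1$ improvement obtained in this paper, ultimately come from \emph{local} second-order invariants (rank of the Hessian and refinements) and are stuck at quadratic order. To break this barrier, the concave-minimization framework must extract a \emph{global} obstruction from the feasible set of rank minimizers — something that sees all of $\mathfrak{S}_d$ at once rather than one permutation at a time. Absent such a new ingredient, the argument will inevitably collapse back to the Hessian-type bound. Identifying and exploiting this global invariant is where essentially all of the difficulty, and all of the novelty, must lie.
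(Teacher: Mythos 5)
The statement you are addressing is a \emph{conjecture}: the paper does not prove it, and neither do you. What you have written is a faithful reconstruction of the paper's proposed research program --- step one is Theorem~\ref{thm_main_dc} (the inequality $\brank(p_{x_0}^{(2k)}) \leq 2^{2k-2}(\dc(p) + 2(k-1)D^{k-1})$, proved via the clow-sequence expansion of $\det(A(x)+\Lambda_n^r)$), step two is the rank-minimization formulation of Theorem~\ref{thm_to_opt} and the explicit description of $\cX_{\perm_{d,\Sigma_d}^{(2k)}}$ in Section~\ref{subsec_explicit}, and step three is the concave-minimization flip of Proposition~\ref{prop_concave_bound}. But a program is not a proof, and the gap you yourself flag in your final paragraph is the entire content of the conjecture.

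To be concrete about where the argument has no substance: you never exhibit a polyhedron $\cP_d \supseteq \cX_p \cap \Psd_n$, never bound $\mu_{n-r}$ at its extreme points, and never produce the ``global obstruction'' that would certify $\brank(\perm_{d,X_d}^{(2k)}) = \Omega(d^{2k})$ for even a single value $k \geq 2$ --- let alone for $k$ growing with $d$, which is what $d^{\omega(\log d)}$ would require (and at which point the $2^{2k-2}$ loss in Theorem~\ref{thm_main_dc} and the choice of base point $X_d \in {\rm Zeros}(\perm_d)$ both become delicate). Two smaller technical slips: the paper's lower bound runs in the direction $\dc(p) \geq c_k\,\brank(p_{x_0}^{(2k)}) - (\text{error})$, so the expansion point must be a \emph{zero} of $p$ (your step one omits this hypothesis, which is what forces the $\Lambda_n^r$ normal form rather than $A(x)+I$); and the concave function being minimized is not ``rank'' itself but the eigenvalue sum $\mu_{n-r}$ on $\Psd_n$, whose positivity certifies a rank lower bound via Lemma~\ref{lem_rank_eigen} --- rank is not concave, and conflating the two would break Theorem~\ref{thm_concave_bound}. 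The statement remains open; your proposal should be read as a summary of the paper's strategy, not as progress toward closing it.
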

This conjecture implies ${\bf VP}_K \neq {\bf VNP}_K$, an arithmetic counterpart of ${\bf P}$ vs. ${\bf NP}$ conjecture,
since permanent is in ${\bf VNP}_K$ (in fact ${\bf VNP}_K$-complete) if the characteristic of $K$ is not equal to two~\cite{Valiant1979Cci}.
%

The current best lower bound for $\dc(\perm_d)$, due to Mignon and Ressayre~\cite{Mignon2004Aqb},
is quadratic.
\begin{thm}[Mignon and Ressayre~\cite{Mignon2004Aqb}] \label{thm_mignon}
Over a field $K$ of characteristic zero, it holds that
\[
\dc({\perm}_d) \geq \frac{d^2}{2}.
\]
\end{thm}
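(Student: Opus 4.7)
The plan is to compare the Hessians of $\perm_d$ and of $\det \circ Q$ at a carefully chosen point. Suppose $\perm_d = \det(Q(x))$ with $Q$ an affine polynomial matrix of size $n$; the aim is $n \geq d^2/2$. The argument rests on the observation that the rank of the Hessian $\mathrm{Hess}(f)(X^*)$ of a polynomial $f$ at a point $X^*$ is an invariant of $f$, so the values computed from $\perm_d$ and from $\det \circ Q$ at the same $X^*$ must agree.

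First I would isolate a point $X^* \in K^{d \times d}$ at which $\perm_d(X^*) = 0$ while $\mathrm{Hess}(\perm_d)(X^*)$ has full rank $d^2$. The $((i,j),(k,l))$-entry of this $d^2 \times d^2$ matrix equals the permanent of the $(d-2) \times (d-2)$ submatrix of $X^*$ obtained by deleting rows $i,k$ and columns $j,l$ (and vanishes when $i=k$ or $j=l$), so the task reduces to exhibiting a matrix with vanishing permanent whose ``sub-permanent'' matrix is nonsingular; a natural candidate is obtained from a generic matrix by zeroing out a single row. Verifying the full-rank property at such an $X^*$ is a self-contained algebraic computation and, in my view, the principal obstacle of the proof.

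Next I would exploit that $Q$ is affine: writing $Q(x) = M + L(x - X^*)$ with $M := Q(X^*)$ and $L \colon K^{d^2} \to K^{n \times n}$ linear, the chain rule gives
\[
\mathrm{Hess}(\det \circ Q)(X^*) \;=\; L^{\top}\,\mathrm{Hess}(\det)(M)\,L,
\]
so the rank on the left is at most $\rank \mathrm{Hess}(\det)(M)$. Since $\perm_d(X^*)=0$ forces $\det M = 0$, the matrix $M$ is singular; it must in fact have corank exactly one, because a corank of $2$ or more makes every first and second partial derivative of $\det$ at $M$ vanish, which together with the identity above would force $\mathrm{Hess}(\det \circ Q)(X^*) = 0$ and contradict the full-rank Hessian found for $\perm_d$. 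By a bilateral change of basis (which multiplies $\det$ by a nonzero scalar and does not change Hessian ranks), I may assume $M = \diag(1,\dots,1,0)$, for which the direct expansion
\[
\det(M + H) \;=\; H_{nn} + \sum_{i=1}^{n-1}\bigl(H_{ii}H_{nn} - H_{in}H_{ni}\bigr) + O(H^3)
\]
makes it easy to verify that the quadratic-form part has rank exactly $2n$.

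Putting the two estimates together,
\[
d^2 \;=\; \rank \mathrm{Hess}(\perm_d)(X^*) \;=\; \rank \mathrm{Hess}(\det\circ Q)(X^*) \;\leq\; 2n,
\]
so $n \geq d^2/2$. The characteristic-zero hypothesis is used to keep the second partial derivatives well-behaved as rank invariants. I expect the main difficulty to lie in the construction of $X^*$ with a nondegenerate Hessian; the chain-rule and corank analysis above are then routine.
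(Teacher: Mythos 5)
Your overall strategy---compare $\rank \mathrm{Hess}(\perm_d)(X^*)$ with $\rank \mathrm{Hess}(\det\circ Q)(X^*)$ at a common zero $X^*$, and bound the latter by $2n$ via the chain rule and the normal form $Q(X^*)\sim\diag(1,\dots,1,0)$---is exactly the original Mignon--Ressayre argument, and the chain-rule half is sound: for affine $Q$ the second-order terms of $Q$ vanish, so $\mathrm{Hess}(\det\circ Q)(X^*)=L^{\top}\mathrm{Hess}(\det)(M)L$, and the quadratic part $H_{nn}\sum_{i<n}H_{ii}-\sum_{i<n}H_{in}H_{ni}$ of $\det$ at $\diag(1,\dots,1,0)$ indeed has rank $2n$. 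The genuine gap is in the other half. You need a point $X^*$ with $\perm_d(X^*)=0$ and $\rank \mathrm{Hess}(\perm_d)(X^*)=d^2$; you correctly flag this as the principal obstacle, but the candidate you propose fails. If $X^*$ has a zero row, then every Hessian entry $h_{(i,j),(k,l)}$ with $i$ and $k$ both different from the index of that row is the permanent of a submatrix that still contains the zero row, hence vanishes; the Hessian therefore has a zero principal block of size $(d^2-d)\times(d^2-d)$, and its rank is at most $2d$, far below $d^2$. The point that actually works is the matrix $\Sigma_d$ used in this paper (all entries $1$ except the $(d,d)$ entry equal to $1-d$), and verifying $\rank(H_{\perm_d,\Sigma_d})=d^2$ is a genuinely nontrivial computation (it is carried out in Mignon--Ressayre, and the signature computation in the proof of Theorem~\ref{thm_d2bound} here in effect reproduces it). Without an explicit $X^*$ and this verification, the proof does not close.

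A smaller slip: corank $\geq 2$ of $M=Q(X^*)$ does not force all second partials of $\det$ to vanish; at $\diag(1,\dots,1,0,0)$ the quadratic part is $H_{n-1,n-1}H_{nn}-H_{n-1,n}H_{n,n-1}$, which has rank $4$, not $0$. This is harmless, because in every corank the rank of $\mathrm{Hess}(\det)(M)$ is at most $2n$ (it equals $2n$, $4$, or $0$ according as the corank is $1$, $2$, or at least $3$), so you can drop the corank-one reduction entirely and conclude $d^2\leq 2n$ directly once the nondegenerate point is in hand. Note also that the paper reaches this theorem by a different route: it proves $\dc(p)\geq\brank(p_{x_0}^{(2)})\geq\frac{1}{2}\rank(H_{p,x_0})$ as the $k=1$ case of Theorem~\ref{thm_main_dc} and then cites the same Hessian-rank computation at $\Sigma_d$; your direct chain-rule comparison is the classical argument that the paper's bi-polynomial rank machinery is designed to generalize.
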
 
Improving this bound is one of the most prominent issues in the literature.
Cai, Chen and Li~\cite{Cai2010Qlb} proved that $\dc(\perm_d) \geq (d-2)(d-3)/2$ over any field $K$ of characteristic not equal to two. 
Mulmuley and Sohoni~\cite{Mulmuley2001GCT1}
proposed a magnificent program,
called {\em geometric complexity theory} (GCT),
to obtain super-polynomial lower bounds
by utilizing deep techniques of 
algebraic geometry and representation theory (also, see~\cite[Chapter 13]{Landsberg2012Tga}).
In the context of GCT,
Landsberg, Manivel and Ressayre~\cite{Landsberg2013Hwd} proved that the same lower bound $d^2/2$ holds for 
the orbit closure version $\overline{\dc}$ of the determinantal complexity.

%
%
\paragraph{Our contribution.}
We introduce the bi-polynomial rank
of a homogeneous polynomial of even degree,
and prove that the determinantal complexity is
bounded below by the bi-polynomial rank.
Our technique may be viewed as 
a higher order generalization of the Hessian rank comparison
proof of the above $d^2/2$ bound (Theorem~\ref{thm_mignon})
by Mignon and Ressayre.
Let $K[x]^{(k)} \subseteq K[x]$ denote the set of homogeneous polynomials 
of degree $k$.
\begin{df}\label{df_brank}
The {\em bi-polynomial rank} $\brank(p)$ of $p\in K[x]^{(2k)}$ is defined as the minimum number $n$ such that 
there exist $2n$ polynomials $f_1,f_2,\dots,f_n$, $g_1,g_2,\dots,g_n \in K[x]^{(k)}$ satisfying 
\[
p = \sum_{i=1}^n f_ig_i.
\]
\end{df}
For $p \in K[x]$ and $x_0 \in K^D$, 
we define a polynomial $p_{x_0}$ by $p_{x_0}(x) := p(x+x_0)$.
We denote by $p_{x_0}^{(k)}$ the degree-$k$ homogeneous part of $p_{x_0}$.
The set of points $z\in K^D$ with $p(z)=0 $ is denoted by ${\rm Zeros}(p)$.
Our main result is the following.
\begin{thm}\label{thm_main_dc}
For a polynomial $p \in K[x]$, $k \in [1,D]$, and $x_0 \in {\rm Zeros}(p)$, it holds that
\begin{align*}
\dc(p) \geq \frac{1 }{2^{2k-2}}\brank(p_{x_0}^{(2k)}) - 2(k-1) D^{k-1}. 
\end{align*}
\end{thm}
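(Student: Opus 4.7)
The plan is to generalise the Mignon--Ressayre Hessian argument (which is essentially the case $k=1$) to higher-order Taylor coefficients, proving that for $p=\det Q$ the bi-polynomial rank of the degree-$2k$ Taylor coefficient at any zero is controlled by the size of $Q$. First, by translating coordinates, I may assume $x_0 = 0$. Setting $n = \dc(p)$ and $p(x) = \det Q(x)$ for some affine polynomial matrix $Q$ of size $n$, write $Q(x) = B + M(x)$ where $B := Q(0)$ and $M(x) := \sum_i x_i A_i$ is homogeneous of degree $1$. Since $p(0) = \det(B) = 0$, $B$ is singular; using the invariance $Q \mapsto P_1 Q P_2$ under invertible $P_1, P_2$ (which preserves $\det Q$ up to a nonzero scalar), I may further assume $B = \diag(I_r, 0_{n-r})$ with $r = \rank(B) \le n-1$.

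Next I would expand $\det(B + M(x))$ by multilinearity of the determinant in columns. For each $F \subseteq [n]$, the mixed determinant (taking the $B$-column at each position $j \notin F$ and the $M$-column at each $j \in F$) vanishes unless $F$ contains every index where $B$ has a zero column, i.e.\ $F \supseteq \{r+1,\dots,n\}$; when this holds, the surviving $B$-columns are standard basis vectors and can be eliminated by Laplace expansion. Extracting the degree-$2k$ part gives
\[
p_{x_0}^{(2k)}(x) \;=\; \sum_{\substack{F \supseteq \{r+1,\dots,n\}\\ |F| = 2k}} \pm \det M_{F,F}(x),
\]
where each summand is the determinant of a $2k \times 2k$ submatrix of $M$. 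Laplace expansion along $k$ rows inside each $M_{F,F}$ then writes $\det M_{F,F}$ as a sum of products of two $k \times k$ minors of $M$, each a polynomial of degree $k$ in $x$.

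The heart of the proof is to combine the many products of $k \times k$ minors appearing above so that only $O(n)$ distinct degree-$k$ polynomials are needed on each side of the bi-polynomial decomposition. A naive count produces a bound polynomial in $n$ of degree much larger than $1$, whereas the theorem demands a linear-in-$n$ leading term. The factor $2^{2k-2} = 4^{k-1}$ suggests a recursion on $k$ in which at each stage one fixes a row or column shared across many $F$'s, expands along it, and transfers a linear factor from one side to the other, doubling the count on each side and multiplying the bi-polynomial rank contribution by $4$. The additive error $2(k-1) D^{k-1}$ should absorb residual products of a degree-$(k-1)$ polynomial and a linear form: the identity $\ell \cdot q = \sum_{|\alpha|=k-1} (\ell x^\alpha)\,q_\alpha$ (for $q = \sum_\alpha x^\alpha q_\alpha$) decomposes any such product into at most $\dim K[x]^{(k-1)} = O(D^{k-1})$ bi-polynomial terms of degree $(k,k)$. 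Putting these estimates together gives $\brank(p_{x_0}^{(2k)}) \le 2^{2k-2}\bigl(n + 2(k-1) D^{k-1}\bigr)$, which rearranges to the theorem. The main obstacle is carrying out the combinatorial regrouping to achieve the linear-in-$n$ leading behaviour; I expect this to require a careful induction on $k$ exploiting the shared row and column structure of the submatrices $M_{F,F}$ as $F$ varies.
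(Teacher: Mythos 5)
Your setup is sound and matches the paper's: reduce to $x_0=0$, normalize $Q(0)$ to $\diag(I_r,0_{n-r})$ (the paper's $\Lambda_n^r$ up to permutation), and observe that the degree-$2k$ part of $\det(Q(0)+M(x))$ is $\sum_F \pm\det M_{F,F}$ over the sets $F\supseteq\{r+1,\dots,n\}$ with $|F|=2k$. You also correctly diagnose that a naive Laplace expansion of each $\det M_{F,F}$ into products of $k\times k$ minors produces on the order of $\binom{n}{2k-(n-r)}\binom{2k}{k}$ bilinear terms, far more than the linear-in-$n$ bound the theorem requires. But the entire content of the theorem lies in closing exactly that gap, and your proposal does not close it: you state that the products must be ``combined'' by ``a careful induction on $k$ exploiting the shared row and column structure'' and explicitly flag this regrouping as the main obstacle. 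That is the theorem's key lemma left unproved. It is not clear that any regrouping of $k\times k$ minors of the $M_{F,F}$ achieves a bound of the form $n+O(D^{k-1})$; the minors for different $F$ do not obviously share enough common factors, and no cancellation mechanism is identified.

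The paper's route to the linear-in-$n$ bound is quite different from what you sketch, and worth contrasting. For the corank-one case $r=n-1$ it replaces the subset/minor expansion by the Mahajan--Vinay clow-sequence expansion: $(\det(A+\Lambda_n^{n-1}))^{(2k)}$ equals a signed sum over \emph{clow sequences} of total length $2k$ passing through vertex $1$, where non-cycle-cover sequences cancel in pairs. The dominant contribution (a single clow of length $2k$ through vertex $1$) is then factored by conditioning on its $(k{+}1)$st vertex $v\in\{2,\dots,n\}$, yielding exactly $n-1$ products of degree-$k$ polynomials; the remaining contributions (clow through vertex $1$ of length $t<2k$) are unbalanced products of degrees $t$ and $2k-t$, each re-balanced into at most $s_{k-\min\{t,2k-t\}}$ terms of bidegree $(k,k)$, which is where the additive $2(k-1)D^{k-1}$ comes from --- this last step does match your $\ell\cdot q$ re-balancing idea in spirit. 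Finally, the factor $2^{2k-2}$ does not arise from a recursion on $k$ as you guess, but from an induction on the corank $n-r$: bilinearity of the determinant in a row/column peels off one diagonal $1$ of $\Lambda_n^r$ at a time, doubling the bound at each of at most $2k-2$ steps (with the extreme case $r=n-2k$ handled separately by a $\binom{2k}{k}$ bound). Without the clow-based factorization through a single midpoint vertex, or some equivalent device, your argument does not reach the stated inequality.
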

We will see in Section~\ref{sec_dim_poly} that every generic polynomial $p \in K[x]^{(2k)}$ has the bi-polynomial rank at least $k! D^k / 2(2k)!$.
This means that the bi-polynomial rank has a potential
to give $\Omega((d^2)^k)$ lower bound to $\dc(\perm_d)$
for every $k$.
A direct implication to the permanent vs. determinant 
problem is the following.
\begin{cor}\label{cor_main}
	Let $k \geq 1$ be an arbitrary integer.
	If there exists a sequence of matrices
	$X_d \in {\rm Zeros}(\perm_d)$ for $d=1,2,\dots$ 
	such that $\brank(\perm_{d,X_d}^{(2k)}) = \Omega(d^{2k})$, 
	then $\dc(\perm_{d}) = \Omega(d^{2k})$.
\end{cor}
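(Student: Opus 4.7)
The plan is to obtain Corollary~\ref{cor_main} as a direct asymptotic consequence of Theorem~\ref{thm_main_dc}. First I would instantiate the theorem with $p = \perm_d$, $x_0 = X_d$, and $D = d^2$ (the number of entries of a $d \times d$ matrix, which is the ambient variable count for $\perm_d$). Since by hypothesis $X_d \in \mathrm{Zeros}(\perm_d)$, the hypothesis of Theorem~\ref{thm_main_dc} is satisfied, and the substitution yields
\[
\dc(\perm_d) \;\geq\; \frac{1}{2^{2k-2}}\, \brank\bigl(\perm_{d,X_d}^{(2k)}\bigr) \;-\; 2(k-1)\, d^{2(k-1)}.
\]

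Next, I would observe that $k$ is fixed (independent of $d$), so both $1/2^{2k-2}$ and $2(k-1)$ are constants not depending on $d$. The assumption $\brank(\perm_{d,X_d}^{(2k)}) = \Omega(d^{2k})$ then makes the first summand on the right-hand side grow at least as $\Omega(d^{2k})$, whereas the subtracted error term is $O(d^{2k-2})$, which is strictly of lower order. Combining these two asymptotics gives $\dc(\perm_d) = \Omega(d^{2k}) - O(d^{2k-2}) = \Omega(d^{2k})$, which is the desired conclusion.

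There is essentially no obstacle in this step: the corollary is a straightforward asymptotic reformulation of Theorem~\ref{thm_main_dc}, and all of the substantive content lies in that theorem itself. The only point requiring attention is the bookkeeping on the number of variables, namely that $D = d^2$ for $\perm_d$, so that the error term $D^{k-1}$ becomes $d^{2(k-1)}$ and is dominated by $d^{2k}$ for every fixed~$k$.
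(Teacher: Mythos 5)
Your proposal is correct and matches the paper's intent exactly: the paper states Corollary~\ref{cor_main} as a ``direct implication'' of Theorem~\ref{thm_main_dc} without further argument, and your instantiation with $D=d^2$ together with the observation that the error term $2(k-1)d^{2k-2}$ is of strictly lower order than $d^{2k}$ for fixed $k$ is precisely the intended reasoning. The only (trivial) point you could add is that the theorem's hypothesis $k\in[1,D]$ holds for all sufficiently large $d$ since $k$ is fixed and $D=d^2$.
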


In the case $k=1$, our approach sharpens
the Hessian approach by Mignon and Ressayre.
We will see in Section~\ref{sec_demonstration}
that Theorem~\ref{thm_main_dc} directly implies Theorem~\ref{thm_mignon}.
Furthermore, over the field $\RR$, our approach improves the quadratic bound as follows.
\begin{thm}\label{thm_d2bound}
Over the field $\RR$, it holds that
\[
\dc({\perm}_d) \geq (d-1)^2 +1.
\]
\end{thm}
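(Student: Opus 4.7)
The plan is to apply Theorem~\ref{thm_main_dc} in the case $k=1$. Both the coefficient $1/2^{2k-2}$ equals $1$ and the error term $2(k-1)D^{k-1}$ vanishes, yielding
\[
\dc({\perm}_d) \geq \brank({\perm}_{d,X}^{(2)})
\]
for every $X \in {\rm Zeros}({\perm}_d)$. It therefore suffices to produce $X_d \in {\rm Zeros}({\perm}_d)$ for which the right-hand side reaches $(d-1)^2+1$.

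Next I would establish a signature formula for $\brank$ on quadratic forms over $\RR$. Writing $q(x) = x^\top M x$ with $M$ real symmetric, any representation $q = \sum_{i=1}^n f_i g_i$ with linear $f_i(x) = a_i^\top x$ and $g_i(x) = b_i^\top x$ corresponds to
\[
M = \tfrac{1}{2}\sum_{i=1}^n \bigl(a_i b_i^\top + b_i a_i^\top\bigr).
\]
Each summand has rank at most two, with nonzero eigenvalues $a_i^\top b_i \pm \|a_i\|\|b_i\|$; by the Cauchy--Schwarz inequality, one is non-negative and the other non-positive. Hence every summand has positive inertia index at most $1$ and negative inertia index at most $1$. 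By subadditivity of inertia indices under summation of real symmetric matrices,
\[
\brank(q) \geq \max\bigl(n_+(M),\, n_-(M)\bigr),
\]
where $(n_+(M), n_-(M))$ is the signature of $M$. The reverse inequality follows from the identity $uu^\top - vv^\top = \tfrac{1}{2}\bigl((u+v)(u-v)^\top + (u-v)(u+v)^\top\bigr)$, which lets one pair opposite-sign eigencomponents, so in fact $\brank(q) = \max(n_+(M), n_-(M))$ over $\RR$.

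With this formula in hand, the problem reduces to exhibiting $X_d \in {\rm Zeros}({\perm}_d)$ whose Hessian of ${\perm}_d$ has at least $(d-1)^2 + 1$ eigenvalues of some fixed sign. Since the Hessian entries $\partial^2 {\perm}_d / \partial x_{ij}\, \partial x_{kl}$ evaluated at $X_d$ equal permanents of $(d-2) \times (d-2)$ submatrices of $X_d$, I would seek a sparsely structured $X_d$---for instance, one with a distinguished zero row (or column) together with a suitable combinatorial core---that forces these entries to vanish outside a block of size $(d-1)^2 + 1$ and renders this block positive or negative semidefinite. The signature formula then immediately gives $\brank({\perm}_{d,X_d}^{(2)}) \geq (d-1)^2+1$.

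The main obstacle is this last step. Controlling the signature is strictly more delicate than the Mignon--Ressayre rank bound, because small changes in $X_d$ can flip eigenvalues across zero without affecting the rank. An explicit combinatorial construction of $X_d$, together with an exhibition of a $((d-1)^2 + 1)$-dimensional subspace on which the Hessian quadratic form is definite, is required. All the other ingredients---the application of Theorem~\ref{thm_main_dc}, the signature formula for the bi-polynomial rank of a quadratic form, and the verification that ${\perm}_d(X_d) = 0$---are routine.
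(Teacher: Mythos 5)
Your reduction is sound and matches the paper's: with $k=1$ Theorem~\ref{thm_main_dc} gives $\dc(\perm_d)\geq\brank(\perm_{d,X}^{(2)})$, and your inertia argument — each summand $\tfrac12(a_ib_i^\top+b_ia_i^\top)$ has at most one positive and one negative eigenvalue, so $\brank(q)\geq\max(n_+(M),n_-(M))$ by subadditivity of the inertia indices — is correct and in fact slightly cleaner than the paper's route through the rank-minimization formulation (Theorem~\ref{thm_to_opt} plus Lemma~\ref{lem_signature}, which bounds $\rank(Q)$ below by $\max\{n_+,n_-\}$ of $Q+Q^\top$). Your converse identity showing $\brank(q)=\max(n_+,n_-)$ is a nice extra, though not needed.

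The genuine gap is the part you yourself flag: you never produce $X_d$ nor compute the signature, and this is the entire substance of the theorem. Moreover, the route you sketch for it — a sparse $X_d$ with a distinguished zero row, forcing the Hessian to vanish outside a semidefinite block of size $(d-1)^2+1$ — is not how this works and is unlikely to succeed. The paper uses the \emph{dense} matrix $\Sigma_d$ (all entries $1$ except $1-d$ in position $(d,d)$, the same witness as in Mignon--Ressayre), at which the Hessian has full rank $d^2$ and signature $(2d-2,\,(d-1)^2+1,\,0)$; so the negative part is not isolated in a semidefinite block but interleaved with $2d-2$ positive eigenvalues. The computation proceeds by writing $H$ as $(d-3)!$ times a $d\times d$ block matrix built from two structured $d\times d$ blocks $B$ and $C$, applying an explicit congruence $T(\cdot)T^\top$ to decouple the last block row/column, and then reading off the inertia of the remaining piece $S_{d-1}\otimes B$ from the Kronecker-product eigenvalue rule together with the signature $(1,d-1,0)$ of the all-ones-off-diagonal matrix $S_d$. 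This yields $n_-=(d-1)+\bigl((d-2)(d-1)+1\bigr)=(d-1)^2+1$. Without this (or an equivalent) explicit inertia computation at a concrete zero of $\perm_d$, the proof is incomplete; small perturbations can indeed move eigenvalues across zero, which is exactly why an explicit witness and an explicit congruence reduction are needed rather than a genericity or sparsity heuristic.
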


\paragraph{Bounding b-rank via concave minimization.}
In the case $k \geq 2$, the direct calculation of the bi-polynomial rank is still difficult.
We propose the following computational procedure 
to bound the bi-polynomial rank over $\RR$.
Let $\Sym_n$ and $\Psd_n$ denote the sets of
real symmetric and positive semidefinite matrices of size $n$, respectively.
%
Suppose $p \in \RR[x]^{(2k)}$. 
We will show that
$\brank(p)$ is 
at least the half of the minimum rank of a matrix of size $n=2s_k$ 
in $\cX_p \cap \Psd_n$,
where $s_k := \binom{D+k-1}{k}$ and $\cX_p$ is an affine subspace in $\Sym_n$
explicitly represented by linear equations determined by the coefficients of $p$;
see the concrete definition in Section~\ref{section_CB}. 
Thus $\brank(p)$ is more than $r/2$
if the sum $\mu_{n-r}(X)$ of the smallest $n-r$ eigenvalues of $X$
is positive for all $X \in \cX_p \cap \Psd_n$.
It is known that the function $X \mapsto \mu_{n-r}(X)$ is concave on $\Sym_n$.
A well-known fact in concave function minimization theory~\cite{Pardalos1986Mfg}
tells us that if we know a polyhedral convex set $\cP \subseteq \Sym_n$
containing $\cX_p \cap \Psd_n$,
then the minimum of $\mu_{n-r}$ is attained by extreme points of $\cP$.
Thus, the positivity of $\mu_{n-r}$ for all these extreme points is 
a certificate of $\brank(p) \geq r/2$.
\begin{prop}\label{prop_concave_bound}
	Let $p \in  \RR[x]^{(2k)}$, $r \in \NN$, and $n = 2s_k$. 
	If there exists $\cY \subseteq \Sym_{n}$ satisfying
	the following property, then $\brank(p) > r/2$.
	
	(i) $\cX_p \cap \Psd_{n} \subseteq \conv(\cY)$.
	
	(ii) $\mu_{n-r}(Y) > 0$ for all $Y \in \cY$.
\end{prop}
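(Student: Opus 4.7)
\textbf{Proof plan for Proposition~\ref{prop_concave_bound}.}
The plan rests on the characterization, established earlier, that $\brank(p)$ is bounded below by half of the minimum rank attained on $\cX_p \cap \Psd_n$. Hence it suffices to show that under hypotheses (i) and (ii), every $X \in \cX_p \cap \Psd_n$ has $\rank(X) > r$. The bridge between ranks and the concave functional $\mu_{n-r}$ is the elementary observation that for $X \in \Psd_n$ we have $\mu_{n-r}(X) \geq 0$, with equality exactly when $X$ has at least $n-r$ vanishing eigenvalues, i.e.\ when $\rank(X) \leq r$. Therefore $\mu_{n-r}(X) > 0$ is equivalent to $\rank(X) > r$ in the PSD cone.

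Next I would invoke the concavity of $X \mapsto \mu_{n-r}(X)$ on $\Sym_n$. This follows from Ky Fan's variational principle, expressing the sum of the smallest $n-r$ eigenvalues as
\[
\mu_{n-r}(X) = \min_{V^\top V = I_{n-r}} \tr(V^\top X V),
\]
so that $\mu_{n-r}$ is an infimum of linear functionals in $X$ and hence concave.

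Now pick an arbitrary $X \in \cX_p \cap \Psd_n$. By hypothesis (i), $X \in \conv(\cY)$, so by definition of the convex hull (or Carath\'eodory's theorem) we may write $X = \sum_i \lambda_i Y_i$ as a finite convex combination with $Y_i \in \cY$, $\lambda_i \geq 0$, and $\sum_i \lambda_i = 1$. Concavity and hypothesis (ii) together give
\[
\mu_{n-r}(X) \geq \sum_i \lambda_i\, \mu_{n-r}(Y_i) > 0.
\]
Combined with the first paragraph, this forces $\rank(X) > r$; taking the minimum over all $X \in \cX_p \cap \Psd_n$ and applying the rank characterization of $\brank$ yields $\brank(p) > r/2$.

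The argument is essentially a one-line application of concavity once the right pieces are in place, so I do not anticipate a genuine obstacle. The only nontrivial ingredient is Ky Fan's concavity of the partial eigenvalue sum, which is classical; the rank-to-b-rank reduction is the substantive result and is assumed, having been established in Section~\ref{section_CB}.
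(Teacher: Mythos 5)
Your proposal is correct and follows essentially the same route as the paper: reduce to $\brank(p) \geq \tfrac{1}{2}\minrank(\cX_p \cap \Psd_n)$ via Corollary~\ref{cor_brank_psd_rankmin}, note that on the PSD cone $\mu_{n-r}(X)>0$ is equivalent to $\rank(X)>r$ (Lemma~\ref{lem_rank_eigen}), and conclude by concavity of $\mu_{n-r}$ that positivity on $\cY$ propagates to $\conv(\cY) \supseteq \cX_p \cap \Psd_n$ (Theorem~\ref{thm_concave_bound}). The only cosmetic difference is that you justify concavity by Ky Fan's variational principle while the paper derives it from the SDP formulation in Proposition~\ref{prop_eigens_sdp}; these are the same classical fact.
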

It should be noted that 
this approach is essentially an outer approximation algorithm~\cite{Kelley1960tcp,Tuy1983Ooa} in the concave minimization.

\paragraph{Related work.}
The bi-polynomial rank $\brank(p_{x_0}^{(2k)})$ can be interpreted
as the minimum width of the $k$th layer of an arithmetic branching program (ABP) computing $p_{x_0}^{(2k)}$.
Since the determinant polynomial of a matrix of size $n$ has an ABP with width at most $n^2$~\cite{Mahajan1997Dca}, it directly follows that a simple but weaker bound $\dc(p) \geq \sqrt{\brank(p_{x_0}^{(2k)})}$.
We include the detailed discussion in Section~\ref{Subsec_ABP}.
Our bound shows the possibility to prove $\dc(\perm_d) = \Omega(d^4)$ by considering forth derivatives $\perm_{d,X_d}^{(4)}$ of $\perm_d$,
which seems significantly simpler than considering eighth derivatives.

Our proof method of Theorem~\ref{thm_main_dc} is first considering a normal form of an affine polynomial matrix $Q$,
and then constructing an ABP of $\det(Q)^{(2k)}$ with small width using an exhaustive construction of low-degree terms.
Such an exhaustive construction implicitly appears in the area of depth reduction of arithmetic circuits~\cite{Valiant1983Fpc}.

Nisan~\cite{Nisan91lowerbounds} 
considered the rank of a matrix defined by partial derivatives of non-commutative
determinant,
and proved an exponential lower bound of the size of
ABP of non-commutative determinant. 
This implies an exponential lower bound of the size of non-commutative formulas for determinant.
We consider the bi-polynomial rank, which is width of ABP, and formulate the bi-polynomial rank
as the minimum matrix-rank over an affine subset of matrices.
Therefore our approach may be viewed as a commutative analogue of Nisan's approach.

The difficulty of lower bound problems come from 
that of proving non-existence of certain objects.
The essential idea in GCT~\cite{Mulmuley2001GCT1} is to flip the non-existence
of embeddings into
the existence of representation-theoretical obstructions.
Our approach might yield 
a comparable optimization-theoretic flip strategy 
for the permanent vs. determinant problem: for proving $\dc(\perm_d) = \Omega(d^{2k})$,
\begin{quote}
	find $X_d \in {\rm Zeros}(\perm_d)$, a polyhedron $\cP_d$ containing 
	$\cX_p \cap \Psd_n$ for $p = \perm_{d,X_d}^{(2k)}$,
	and $r = O(d^{2k})$ such that $\mu_{n - r}(P) >0$
	holds for all extreme points $P$ of $\cP_d$.
\end{quote}
Though much still remains to be unsettled,
we hope that
our approach will bring a new inspiration and
trigger a new attack to this extremely difficult lower bound issue.

\paragraph{Organization.}
In Section~\ref{sec_basic_properties}, we prove basic properties of the bi-polynomial rank.
In Section~\ref{section_CB}, 
we introduce a formulation of the bi-polynomial rank as the minimum matrix-rank over an affine subspace of matrices.
In Section~\ref{sec_dim_poly}, we prove that generic polynomials $p \in K[x]^{(2k)}$
have the bi-polynomial rank at least $D^k / (2k)!$.
In Section~\ref{Subsec_ABP}, we discuss a relation between the bi-polynomial rank
and ABP.
In Section~\ref{sec_relation_b-rank_detcomp}, 
we consider lower bounds of $\dc(\perm_d)$ from the bi-polynomial rank for the case $k=1$.
In Section~\ref{sec_demonstration}, 
we demonstrate that the bi-polynomial rank generalizes the Hessian rank,
and give an alternative and conceptually simpler proof of Theorem~\ref{thm_mignon}.
In Section~\ref{subsec_newbound}, we prove 
$\dc(\perm_d) \geq (d-1)^2 + 1$ over the real field (Theorem~\ref{thm_d2bound}).
%
In Section~\ref{sec_cnj}, we prove Proposition~\ref{prop_concave_bound}, and propose an approach for the permanent vs. determinant problem based on the bi-polynomial rank, the rank minimization, and the concave minimization for $k \geq 2$.
In Section~\ref{sec_proof},
we prove Theorem~\ref{thm_main_dc}, the main result of this paper.

\section{Basic properties of b-rank}\label{sec_basic_properties}
\subsection{Rank minimization for b-rank}\label{section_CB}
To consider the calculation of the bi-polynomial rank, we formulate the bi-polynomial rank as the minimum matrix-rank over an affine subspace of matrices.
This formulation is a basis for discussions in subsequent sections.

Let $\Mat_n(K)$ be the set of square matrices of size $n$ over the field $K$. 
For a nonnegative integer $k$,
let $\cI_k (=\cI_{k,D})$ denote the set of 
$D$-tuples $(i_1,i_2,\dots,i_D)$
of nonnegative integers such that 
the sum $i_1 + i_2 + \cdots + i_D$ is equal to $k$.
For $I = (i_1,i_2,\dots,i_D) \in \cI_k$,
let $x^{I}$ denote the monomial $x_1^{i_1}x_2^{i_2}\cdots x_D^{i_D}$.
We define $s_k(={s_{k,D}}) := |\cI_k| = \binom{D + k -1}{k}$,
and consider that $s_k$-dimensional vectors $\uu = (u_I)_{I \in \cI_k}$ and matrices $Q = (q_{I,J})_{I,J \in \cI_k}$ of size $s_k$ are indexed by elements of $\cI_k$.
Then their products are written as $(Q\uu)_I = \sum_{J \in \cI_k} q_{I,J} u_{J}$.
Let $\vv(x) := (x^I)_{I \in \cI_k}$ be the $s_k$-dimensional vector which consists of monomials.
\begin{thm}\label{thm_to_opt}
For $p \in K[x]^{(2k)}$, $\brank(p)$ is equal to the optimum value of the following problem:
\begin{align*}
\begin{array}{ll}
{\rm Minimize} \quad &\rank(Q) \\
{\rm subject \ to} \quad & p(x) = \vv(x)^{\top} Q \vv(x), \\
  & Q \in \Mat_{s_k}(K).
\end{array}
\end{align*}
\begin{proof}
Suppose that $Q_{{\rm opt}}$ attains the optimum value.
First we prove that $\brank(p) \leq \rank(Q_{{\rm opt}})$.
Let $r := \rank(Q_{{\rm opt}})$. We can represent $Q_{{\rm opt}}$ as a sum $Q_{{\rm opt}} = \sum_{i=1}^r \ff_i \gvec_i^{\top}$ of rank one matrices $\ff_i \gvec_i^{\top}$,
where $\ff_i$ and $ \gvec_i$ are $s_k$-dimensional vectors for $i=1,2,\dots,r$.
Then we have 
\[
p(x) = \vv(x)^{\top} Q_{{\rm opt}} \vv(x) = \sum_{i=1}^{r} (\ff_i^{\top} \vv(x))( \gvec_i^{\top} \vv(x)).
\]
Choosing $2r$ polynomials 
$f_1,f_2,\dots,f_r,g_1,g_2,\dots,g_r \in K[x]^{(k)}$ 
as $f_i (x) = \ff_i^{\top} \vv(x)$ and $g_i (x) = \gvec_i^{\top} \vv(x)$ for $i=1,2,\dots,r$, we have $\brank(p) \leq r = \rank(Q_{{\rm opt}})$.

Next, we show that $\brank(p)\geq \rank(Q_{{\rm opt}})$.
Set $n := \brank(p)$. 
From the definition, there exist $2n$ polynomials $f_1,f_2,\dots,f_n,g_1,g_2,\dots,g_n \in K[x]^{(k)}$
such that $p = \sum_{i=1}^n f_i g_i$.
Suppose that $f_i(x) = \sum_{I \in \cI_k} f_{i,I} x^I$, for $f_{i,I} \in K$, where $i=1,2,\dots,n$.
Then we can represent polynomials $f_i$ by inner product of $s_k$-dimensional vectors as $f_i(x) = \ff_i^{\top} \vv(x)$, where $\ff_i := (f_{i,I})_{I \in  \cI_k}$.
Also, we can represent $g_i$ as $g_i(x) = \gvec_i^{\top} \vv(x)$ where $\gvec_i := (g_{i,I})_{I \in  \cI_k}$.
Then we have 
\[
p(x) = \sum_{i=1}^n (\ff_i^{\top} \vv(x))(\gvec_i^{\top} \vv(x)) = \sum_{i=1}^n \vv(x)^{\top} (\ff_i \gvec_i^{\top}) \vv(x).
\]
Defining $Q_0 := \sum_{i=1}^n \ff_i \gvec_i^{\top}$, it follows that $\vv(x)^{\top} Q_0 \vv(x) = p(x)$.
Thus $Q_0$ satisfies the constraints, and we have $\brank(p) = n \geq \rank(Q_0) \geq \rank(Q_{{\rm opt}})$.
\end{proof}
\end{thm}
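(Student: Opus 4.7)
The plan is to establish both inequalities by exhibiting an explicit correspondence between bi-polynomial decompositions $p = \sum_i f_i g_i$ and matrix representations $p(x) = \vv(x)^\top Q \vv(x)$. The bridge is the fact that every homogeneous polynomial $f \in K[x]^{(k)}$ is determined by its coefficient vector $\ff \in K^{s_k}$ via $f(x) = \ff^\top \vv(x)$, because $\{x^I\}_{I \in \cI_k}$ is a basis of $K[x]^{(k)}$. This reduces the statement to the elementary observation that rank-one decompositions of a matrix correspond to expressions of the associated bilinear form as a sum of products of linear functionals.

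First, for the inequality $\brank(p) \le \mathrm{opt}$: let $Q_{\mathrm{opt}}$ attain the minimum with $r := \rank(Q_{\mathrm{opt}})$, and take any rank-one decomposition $Q_{\mathrm{opt}} = \sum_{i=1}^r \ff_i \gvec_i^\top$ with $\ff_i, \gvec_i \in K^{s_k}$. Substituting into the constraint gives
\[
p(x) = \vv(x)^\top Q_{\mathrm{opt}} \vv(x) = \sum_{i=1}^r \bigl(\ff_i^\top \vv(x)\bigr)\bigl(\gvec_i^\top \vv(x)\bigr),
\]
and setting $f_i(x) := \ff_i^\top \vv(x)$, $g_i(x) := \gvec_i^\top \vv(x) \in K[x]^{(k)}$ exhibits $p$ as a sum of $r$ products, so $\brank(p) \le r$.

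Second, for the inequality $\brank(p) \ge \mathrm{opt}$: write $p = \sum_{i=1}^n f_i g_i$ with $n = \brank(p)$ and $f_i, g_i \in K[x]^{(k)}$, extract coefficient vectors $\ff_i := (f_{i,I})_{I \in \cI_k}$ and $\gvec_i := (g_{i,I})_{I \in \cI_k}$, and set $Q_0 := \sum_{i=1}^n \ff_i \gvec_i^\top$. Then $\vv(x)^\top Q_0 \vv(x) = \sum_i f_i(x) g_i(x) = p(x)$, so $Q_0$ is feasible, and since $Q_0$ is a sum of $n$ rank-one matrices, $\mathrm{opt} \le \rank(Q_0) \le n$.

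I do not anticipate a real obstacle; the only conceptual subtlety is that the map $Q \mapsto \vv(x)^\top Q \vv(x)$ is not injective (since monomial products such as $x_i x_j$ and $x_j x_i$ coincide, distinct $Q$ can yield the same polynomial), but this non-uniqueness only enlarges the feasible set of the minimization and is harmless for both directions of the argument. Consequently the proof is essentially a translation of linear-algebraic rank decomposition into the bilinear pairing supplied by $\vv(x)$.
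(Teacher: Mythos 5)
Your proof is correct and follows essentially the same route as the paper's: a rank-one decomposition of an optimal $Q_{\mathrm{opt}}$ yields a bi-polynomial decomposition of length $\rank(Q_{\mathrm{opt}})$, and conversely the coefficient vectors of an optimal decomposition $p=\sum_i f_i g_i$ assemble into a feasible matrix of rank at most $\brank(p)$. Your added remark on the non-injectivity of $Q\mapsto \vv(x)^{\top}Q\vv(x)$ is a correct and harmless observation not made explicit in the paper.
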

Observe that the feasible region of the above problem is an affine subspace of the set of matrices.
We give similar formulations over symmetric and positive semidefinite matrices over $\RR$.
%
%
%
%
\begin{cor}\label{cor_brank_sym_rankmin}
	For $p \in \RR[x]^{(2k)}$, $\brank(p)$ is at least the half of the optimum value of the following problem:
	\begin{align*}
	\begin{array}{ll}
	{\rm Minimize} \quad &\rank(Q) \\
	{\rm subject \ to} \quad & p(x) = \vv(x)^{\top} Q \vv(x), \\
	& Q \in \Sym_{s_k}.
	\end{array}
	\end{align*}
	\begin{proof}
		Consider the optimum solution $Q' \in \Mat_n(\RR)$ of the corresponding optimization problem in Theorem~\ref{thm_to_opt}.
		Then it holds that $\brank(p) = \rank(Q')$.
		Since $Q = (Q' + Q'^{\top})/2$ is a feasible solution of the above problem and $\rank(Q') \geq \rank(Q)/2$, the statement holds.
	\end{proof}
\end{cor}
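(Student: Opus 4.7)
The plan is a clean reduction to Theorem~\ref{thm_to_opt} via a symmetrization argument. First I would invoke Theorem~\ref{thm_to_opt} to write $\brank(p) = \rank(Q')$ for some optimal (possibly non-symmetric) $Q' \in \Mat_{s_k}(\RR)$ satisfying $p(x) = \vv(x)^\top Q' \vv(x)$. Given this $Q'$, I would form its symmetrization $Q := (Q' + Q'^\top)/2 \in \Sym_{s_k}$ and show two things: $Q$ is feasible for the symmetric rank-minimization problem, and $\rank(Q) \leq 2 \rank(Q')$. Denoting the symmetric optimum by $r_{\Sym}$, these two facts combine to give $r_{\Sym} \leq \rank(Q) \leq 2\rank(Q') = 2\brank(p)$, i.e.\ $\brank(p) \geq r_{\Sym}/2$.

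Feasibility is essentially automatic. Since $\vv(x)^\top Q' \vv(x)$ is a scalar, it equals its own transpose $\vv(x)^\top Q'^\top \vv(x)$, so averaging yields $\vv(x)^\top Q \vv(x) = \vv(x)^\top Q' \vv(x) = p(x)$. For the rank bound I would use subadditivity of rank and the fact that $\rank(Q'^\top) = \rank(Q')$: $\rank(Q) = \rank\!\bigl((Q'+Q'^\top)/2\bigr) \leq \rank(Q') + \rank(Q'^\top) = 2\rank(Q')$, giving exactly the factor of two that appears in the statement.

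I do not foresee a real obstacle. The only point to double-check is that the symmetrization uses invertibility of $2$, which is fine over $\RR$ (this is also why the statement is phrased for $\RR$ rather than an arbitrary field). The proof above only establishes the $1/2$ factor and does not address whether it is tight; in principle one might hope to remove it using a more clever construction that produces a symmetric $Q$ of rank equal to $\brank(p)$ directly, but the statement of the corollary is already weakened to accommodate the loss, so pursuing tightness is unnecessary here.
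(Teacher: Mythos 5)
Your proposal is correct and follows essentially the same route as the paper: symmetrize the optimal $Q'$ from Theorem~\ref{thm_to_opt}, observe that $\vv(x)^{\top}Q\vv(x)=\vv(x)^{\top}Q'\vv(x)$ since a scalar equals its transpose, and bound $\rank(Q)\leq 2\rank(Q')$ by subadditivity of rank. The paper's proof is just a terser version of exactly these steps.
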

\begin{cor}\label{cor_brank_psd_rankmin}
	For $p \in \RR[x]^{(2k)}$, $\brank(p)$ is at least the half of the optimum value of the following problem:
	\begin{align*}
	\begin{array}{ll}
	{\rm Minimize} \quad &\rank(Q_+) + \rank(Q_-) \\
	{\rm subject \ to} \quad & p(x) = \vv(x)^{\top} (Q_+ - Q_-) \vv(x), \\
	& Q_+, Q_- \in \Psd_{s_k}.
	\end{array}
	\end{align*}
	\begin{proof}
		Since any symmetric matrix $Q \in \Sym_n$ can be uniquely represented as the difference $Q=Q_+ - Q_-$ of the two positive semidefinite matrices $Q_+, Q_- \in \Psd_n$
		satisfying $\rank(Q) = \rank(Q_+) + \rank(Q_-)$,
		the statement follows from Corollary~\ref{cor_brank_sym_rankmin}.
	\end{proof}
\end{cor}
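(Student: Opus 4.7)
The plan is to reduce this statement to Corollary~\ref{cor_brank_sym_rankmin} by showing that the two rank minimization problems have the same optimum value. Let $m$ denote the optimum of the positive semidefinite decomposition problem in the present corollary and let $m'$ denote the optimum of the symmetric problem in Corollary~\ref{cor_brank_sym_rankmin}. It suffices to establish $m = m'$, after which Corollary~\ref{cor_brank_sym_rankmin} immediately gives $\brank(p) \geq m'/2 = m/2$.

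For the inequality $m \leq m'$, I would start from an optimal symmetric $Q \in \Sym_{s_k}$ for the symmetric problem. Applying the real spectral theorem, write $Q = U \Lambda U^{\top}$ with $U$ orthogonal and $\Lambda$ diagonal, and split $\Lambda = \Lambda_{+} - \Lambda_{-}$ by collecting the positive eigenvalues into $\Lambda_{+}$ and the absolute values of the negative eigenvalues into $\Lambda_{-}$. Setting $Q_{\pm} := U \Lambda_{\pm} U^{\top}$ produces a pair of positive semidefinite matrices with $Q = Q_{+} - Q_{-}$, and because the diagonals of $\Lambda_{+}$ and $\Lambda_{-}$ have disjoint support one obtains $\rank(Q_{+}) + \rank(Q_{-}) = \rank(Q) = m'$. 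Hence $(Q_{+}, Q_{-})$ is feasible for the PSD problem with objective value $m'$, so $m \leq m'$.

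For the reverse inequality $m' \leq m$, take an optimal pair $(Q_{+}, Q_{-})$ for the PSD problem and set $Q := Q_{+} - Q_{-}$. Then $Q$ lies in $\Sym_{s_k}$, it satisfies $\vv(x)^{\top} Q \vv(x) = \vv(x)^{\top}(Q_{+} - Q_{-}) \vv(x) = p(x)$, and subadditivity of rank yields $\rank(Q) \leq \rank(Q_{+}) + \rank(Q_{-}) = m$. Combining the two inequalities gives $m = m'$, and the claim follows.

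There is no serious obstacle in this argument: it is a bookkeeping step invoking the real spectral theorem together with subadditivity of rank. The only mildly delicate point is verifying the identity $\rank(Q_{+}) + \rank(Q_{-}) = \rank(Q)$ in the spectral decomposition, which follows from the orthogonality of the eigenspaces associated with the positive and the negative eigenvalues of $Q$; this orthogonality is what makes the decomposition rank-additive and underlies the uniqueness statement used in the author's more condensed argument.
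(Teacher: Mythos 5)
Your proof is correct and uses essentially the same idea as the paper: splitting a symmetric $Q$ into $Q_{+}-Q_{-}$ via the spectral theorem, with rank additivity coming from the orthogonality of the positive and negative eigenspaces. The only difference is that you also prove the (unneeded but true) reverse inequality $m'\leq m$ via subadditivity of rank, so you establish equality of the two optima where the paper only needs, and only states, the one direction required to invoke Corollary~\ref{cor_brank_sym_rankmin}.
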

For $p \in \RR[x]^{(2k)}$, we define $\cX_p$
as the set of pairs $(Q_+, Q_-)$ of $s_k \times s_k$ matrices
satisfying linear equation $p(x)=\vv(x)^{\top} (Q_+ - Q_-) \vv(x)$.
By the embedding
\[
(Q_+, Q_-) \mapsto 
\left(
\begin{array}{cc}
Q_+ &O \\
O &Q_-
\end{array}
\right).
\]
we regard $\cX_p$ as an affine subspace of $\Sym_{2s_k}$.
Then $\cX_p \cap \Psd_{2s_k}$ is the feasible region of the optimization problem in Corollary~\ref{cor_brank_psd_rankmin}.


\subsection{b-rank of generic polynomials}\label{sec_dim_poly}
The inequality in Theorem~\ref{thm_main_dc} 
is nontrivial
only if the bi-polynomial rank is larger than $2^{2k-1}(k-1)D^{k-1}$.
We are going to show that for $D \gg k$ and a generic polynomial, 
this condition holds.
Here we suppose that $K$ is an algebraically closed field of characteristic zero.
We use the terminologies in Section~\ref{section_CB}.
Observe that polynomials in $K[x]^{(2k)}$ are determined by $s_{2k} $ coefficients, and therefore we regard that a homogeneous polynomial $p \in K[x]^{(2k)}$ is a point in $K^{s_{2k}}$, under the correspondence $p(x) = \sum_{I \in \mathcal{I}^{(2k)}_D}a_I x^I \mapsto (a_I)_{I \in \mathcal{I}^{(2k)}_D} \in K^{s_{2k}}$.
Then the set of polynomials $p$ satisfying $\brank(p) \leq r$ are characterized in terms of algebraic geometry, as follows.
\begin{thm}\label{thm_dim_birank}
	Let $S := \{q \in K[x]^{(2k)} \mid \brank(q) \leq r  \} \subseteq K^{s_{2k}}$.
	Then the Zariski closure $\overline{S}$ is an irreducible variety having dimension at most $r(2s_k-r)$.
	\begin{proof}
		We identify $\Mat_{s_k}(K)$ with $K^{s_k^2}$.
		Let $Z_r := \{ X \in K^{s_k^2} \mid \rank(X) \leq r \}$.
		$Z_r$ is called the {\em determinantal variety}, and it is known that $Z_r$ is an irreducible variety of dimension $r(2s_k - r)$ (see, e.g.,~\cite{Harris1995Ag}).
		We define $\pi:K^{s_k^2} \to K^{s_{2k}}$ by 
		\[
		(\pi(X))_H := \sum_{I,J:I+J = H} X_{I,J} \quad  (I,J \in \cI_k,\ H \in \cI_{2k}) . 
		\]
		This $\pi$ is a linear projection.
		For $q \in K[x]^{(2k)}$,
		Theorem~\ref{thm_to_opt} shows that $\brank(q) \leq r$
		if and only if there exists $X \in K^{s_k^2}$ such that $\rank(X) \leq r$
		and $\pi(X) = q$.
		Thus it follows that $S = \pi(Z_r)$.
		As the Zariski closure of the linear projection of the irreducible variety $Z_r$,
		$\overline{S} = \overline{\pi(Z_r)}$ is irreducible and
		its dimension is at most $r(2s_k - r)$.
	\end{proof}
\end{thm}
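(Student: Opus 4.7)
The plan is to realize the set $S$ as the image of a well-understood algebraic variety under a linear map, and then invoke standard facts about Zariski closures of images of irreducible varieties.

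First I would set up the classical determinantal variety. Identify $\Mat_{s_k}(K)$ with the affine space $K^{s_k^2}$ and let $Z_r := \{X \in K^{s_k^2} \mid \rank(X) \leq r\}$. It is a classical result (see, e.g., Harris's textbook on algebraic geometry) that $Z_r$ is an irreducible algebraic variety of dimension exactly $r(2s_k - r)$. This will be the only nontrivial algebro-geometric input.

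Next I would construct the relevant linear map. Define $\pi: K^{s_k^2} \to K^{s_{2k}}$ by
\[
(\pi(X))_H := \sum_{\substack{I, J \in \cI_k \\ I + J = H}} X_{I,J}, \quad H \in \cI_{2k}.
\]
Under the identification of $K[x]^{(2k)}$ with $K^{s_{2k}}$ via coefficient vectors, the map $\pi$ is precisely $X \mapsto \vv(x)^\top X \vv(x)$, since the coefficient of $x^H$ in $\vv(x)^\top X \vv(x)$ is obtained by summing $X_{I,J}$ over all pairs with $I + J = H$. In particular $\pi$ is a linear (hence regular) map.

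Now I would combine these ingredients. By Theorem~\ref{thm_to_opt}, $\brank(q) \leq r$ holds if and only if there exists $X \in \Mat_{s_k}(K)$ with $\pi(X) = q$ and $\rank(X) \leq r$. Consequently $S = \pi(Z_r)$. The Zariski closure of the image of an irreducible variety under a morphism is again irreducible, so $\overline{S} = \overline{\pi(Z_r)}$ is irreducible. Moreover, for a morphism of varieties the dimension of the closure of the image is bounded above by the dimension of the source, giving
\[
\dim \overline{S} \leq \dim Z_r = r(2s_k - r),
\]
which is the desired bound.

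The proof should be essentially a bookkeeping argument once the identification $S = \pi(Z_r)$ is in place; the only real obstacle is verifying that the formula for $\pi$ correctly models the quadratic form $\vv(x)^\top X \vv(x)$, but this is immediate from expanding the product. No delicate dimension-counting is required, since we only need an upper bound and can afford to lose dimension when passing to the closure of a projection.
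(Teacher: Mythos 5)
Your proposal is correct and follows exactly the same route as the paper: identify $S$ as $\pi(Z_r)$ for the determinantal variety $Z_r$ and the linear projection $\pi$ encoding $X \mapsto \vv(x)^{\top} X \vv(x)$, then apply the standard facts that the closure of the image of an irreducible variety under a morphism is irreducible and of dimension at most that of the source. Nothing is missing.
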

From Theorem~\ref{thm_dim_birank}, we can obtain a lower bound
of the bi-polynomial rank for generic polynomials.
\begin{prop}\label{prop_basic_bounds}
	For a polynomial $p \in \CC[x]^{(2k)}$ with algebraically independent coefficients over $\QQ$, it holds that $\brank(p) \geq k! D^k /2(2k)!$. 
	\begin{proof}
		Let $r := \brank(p)$ and $S_r := \{q \in \CC[x]^{(2k)} \mid \brank(q) \leq r  \} \subseteq \CC^{s_{2k}}$. From Theorem~\ref{thm_dim_birank}, $\overline{S_r}$ is an irreducible variety of dimension at most $r(2s_k - r)$. 
		Since $Z_r$ and hence $\overline{S}_r$ is defined over $\QQ$
		and $p$ has no algebraic relation over $\QQ$ in its coefficients,
		$p \in \overline{S_r}$ implies that $\overline{S_r}$ must be $\CC^{s_{2k}}$.
		Comparing the dimensions, it must holds that $r(2 s_k-r) \geq s_{2k}$.
		This implies $r^2 - 2s_k r + s_{2k} \leq 0$ and
		\begin{align*}
		r \geq s_k - \sqrt{s_k^2 - s_{2k}}
		\geq s_k 
		\left(1- \left(1 - \frac{s_{2k}}{2s_k^2} \right) \right)
		=\frac{s_{2k}}{2 s_k} \geq \frac{k!D^k}{2(2k)!}. 
		\end{align*}
		In the second inequality, we use the fact that $\sqrt{1-x} \leq 1 - x/2$ for $x \leq 1$.
	\end{proof}
\end{prop}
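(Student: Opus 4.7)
The plan is to combine Theorem~\ref{thm_dim_birank} with the algebraic independence hypothesis: the former bounds $\dim \overline{S_r}$ in terms of $r$, while the latter forces $\overline{S_r}$ to fill the ambient space $\CC^{s_{2k}}$, after which the claimed bound drops out of the resulting quadratic inequality.

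First I would set $r := \brank(p)$, so that $p$ lies in the set $S_r := \{q \in \CC[x]^{(2k)} : \brank(q) \leq r\}$ considered in Theorem~\ref{thm_dim_birank}. By that theorem $S_r = \pi(Z_r)$, where $Z_r$ is the classical determinantal variety, cut out by the $(r+1)$-minors (hence defined by polynomials with integer coefficients), and $\pi$ is a $\QQ$-linear projection. Consequently $\overline{S_r}$ is a subvariety of $\CC^{s_{2k}}$ defined over $\QQ$. The algebraic independence of the coefficients of $p$ over $\QQ$ then means that $p$ satisfies no nonzero polynomial relation with rational coefficients, so it cannot lie in any proper $\QQ$-subvariety; hence $\overline{S_r} = \CC^{s_{2k}}$. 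Comparing dimensions through Theorem~\ref{thm_dim_birank} yields $s_{2k} = \dim \overline{S_r} \leq r(2 s_k - r)$, i.e.\ $r^2 - 2 s_k r + s_{2k} \leq 0$, which gives $r \geq s_k - \sqrt{s_k^2 - s_{2k}}$.

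From here it is a clean calculation. The elementary bound $\sqrt{1-t} \leq 1 - t/2$ for $t \in [0,1]$, applied with $t = s_{2k}/s_k^2$, reduces the estimate to $r \geq s_{2k}/(2 s_k)$. Writing out the binomial formulas gives
\[
\frac{s_{2k}}{s_k} = \frac{k!}{(2k)!}\, (D+k)(D+k+1)\cdots(D+2k-1) \geq \frac{k!\, D^k}{(2k)!},
\]
which yields the desired $r \geq k!\, D^k / (2(2k)!)$. The one place where care is required is the first step: I would need to confirm that the Zariski closure of the image of a $\QQ$-constructible set under a $\QQ$-morphism is again a $\QQ$-variety, so that the algebraic-independence argument applies verbatim. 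This is standard in algebraic geometry but worth pinning down as the crux of the genericity reduction; the numerical part afterwards is essentially bookkeeping with binomial coefficients.
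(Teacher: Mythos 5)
Your proposal is correct and follows essentially the same route as the paper: bound $\dim\overline{S_r}$ by $r(2s_k-r)$ via Theorem~\ref{thm_dim_birank}, use the fact that $\overline{S_r}$ is defined over $\QQ$ together with the algebraic independence of the coefficients to force $\overline{S_r}=\CC^{s_{2k}}$, and then solve the quadratic inequality using $\sqrt{1-t}\leq 1-t/2$. Your explicit computation of $s_{2k}/s_k$ as a ratio of binomial coefficients is a slightly more detailed version of the paper's final inequality, and your flagged concern about $\QQ$-rationality of the closure of the image is exactly the point the paper addresses by noting that $Z_r$ is cut out by minors with integer coefficients.
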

This lower bound is asymptotically tight if $k$ is a constant.
\begin{prop}
	For any polynomial $p \in K[x]^{(2k)}$, it holds that $\brank(p) \leq s_k \leq D^k$.
\end{prop}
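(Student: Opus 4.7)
The plan is to reduce to the matrix-rank formulation provided by Theorem~\ref{thm_to_opt}: it suffices to exhibit a single matrix $Q \in \Mat_{s_k}(K)$ with $p(x) = \vv(x)^\top Q \vv(x)$, since any such matrix automatically has rank at most $s_k$ (the matrix size).

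To construct such a $Q$, I would write $p(x) = \sum_{H \in \cI_{2k}} a_H x^H$. For each $H \in \cI_{2k}$, choose any fixed decomposition $H = I_H + J_H$ with $I_H, J_H \in \cI_k$; such a splitting always exists (for instance, view $H$ as a multiset of size $2k$ and put the first $k$ entries in $I_H$ and the rest in $J_H$). Define $Q$ by setting $Q_{I_H, J_H} := a_H$ for every $H \in \cI_{2k}$ and zero on all other entries. Since the map $H \mapsto (I_H, J_H)$ is injective, no collisions occur, so
\[
\vv(x)^\top Q \vv(x) = \sum_{H \in \cI_{2k}} a_H\, x^{I_H} x^{J_H} = \sum_{H \in \cI_{2k}} a_H\, x^H = p(x).
\]
Hence $Q$ is feasible in the optimization problem of Theorem~\ref{thm_to_opt}, and $\brank(p) \leq \rank(Q) \leq s_k$.

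For the second inequality $s_k \leq D^k$, I would use the natural surjection $[D]^k \to \cI_k$ sending an ordered tuple $(i_1, \ldots, i_k)$ to the exponent vector of the monomial $x_{i_1} \cdots x_{i_k}$. This is surjective because every monomial of degree $k$ arises this way, so $s_k = |\cI_k| \leq D^k$.

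There is essentially no obstacle in this argument; everything reduces to the observation that any multi-index of degree $2k$ admits a (non-unique) splitting into two multi-indices of degree $k$, combined with the trivial size bound on the rank of an $s_k \times s_k$ matrix. The only mild point to check is that fixing a single splitting per $H$ avoids any double-counting in $\vv(x)^\top Q \vv(x)$, which is immediate from injectivity of $H \mapsto (I_H, J_H)$.
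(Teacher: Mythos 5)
Your proof is correct and matches the paper's in substance: the paper deduces the bound from Lemma~\ref{lem_basic} (applied with $m=k$), whose proof is exactly your monomial-splitting idea --- group the terms of $p$ according to a chosen degree-$k$ monomial divisor --- stated directly as a decomposition $p=\sum_{i=1}^{s_k} f_i g_i$ rather than routed through the matrix formulation of Theorem~\ref{thm_to_opt}. The two presentations are essentially interchangeable, since a feasible $Q$ whose nonzero rows are indexed by $\cI_k$ corresponds precisely to such a decomposition with the $f_i$ taken to be monomials.
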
 
This proposition immediately follows from Lemma~\ref{lem_basic}.

\subsection{b-rank and arithmetic branching program}\label{Subsec_ABP}
We here discuss a relation between the bi-polynomial rank and an arithmetic branching program (ABP).
We show that the following weaker statement than Theorem~\ref{thm_main_dc}
easily follows from known facts on an ABP of determinant.
\begin{prop}\label{prop_sqrt}
	For a polynomial $p \in K[x]$, $k \in \NN$, and $x_0 \in K^D \setminus {\rm Zeros}(p)$, it holds that
	\begin{align*}
		\dc(p) \geq \sqrt{\brank(p_{x_0}^{(2k)})}. 
	\end{align*}
\end{prop}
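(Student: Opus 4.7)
The plan is to exploit the arithmetic branching program (ABP) interpretation of the bi-polynomial rank together with the classical Mahajan--Vinay construction of a width-$n^2$ ABP for the $n \times n$ determinant~\cite{Mahajan1997Dca}.

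First I would recast $\brank$ in ABP language: for a homogeneous $q \in K[x]^{(2k)}$, a decomposition $q = \sum_{v=1}^w f_v g_v$ with $f_v, g_v \in K[x]^{(k)}$ is precisely a ``middle cut'' of a homogeneous ABP of depth $2k$ at its $k$-th layer, where every edge is labelled by a homogeneous linear form in $x$. Thus $\brank(q)$ coincides with the minimum width, over all such depth-$2k$ homogeneous ABPs computing $q$, of the $k$-th layer. This converts the problem into exhibiting a depth-$2k$ homogeneous ABP for $p_{x_0}^{(2k)}$ whose middle-layer width is at most $\dc(p)^2$.

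Next, assuming $\dc(p) = n$, I would write $p = \det(Q)$ with $Q$ an affine polynomial matrix of size $n$, shift to $\tilde{Q}(x) := Q(x+x_0)$, and use $x_0 \notin {\rm Zeros}(p)$, which gives $\det(\tilde{Q}(0)) = p(x_0) \neq 0$, to normalize $\tilde{Q}(x) = \tilde{Q}(0)(I + M(x))$ where $M$ has homogeneous linear entries. Feeding $I + M(x)$ into the Mahajan--Vinay ABP yields a depth-$n$, width-$n^2$ ABP computing $\det(I + M(x)) = p_{x_0}(x)/\det(\tilde{Q}(0))$, whose edges are labelled by entries of $I + M(x)$ and therefore have $x$-degree $0$ (the constant contributions on the diagonal) or $1$ (the linear forms $M_{ij}$ elsewhere). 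To isolate $p_{x_0}^{(2k)} = \det(\tilde{Q}(0)) \cdot e_{2k}(M(x))$ I would homogenize by tagging each node with its accumulated $x$-degree and re-layering the resulting DAG by $x$-degree; a cut at the $x$-degree-$k$ layer then furnishes a bi-polynomial rank decomposition of $p_{x_0}^{(2k)}$, and by the first step this bounds $\brank(p_{x_0}^{(2k)})$ from above by the width of that cut.

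The main obstacle will be showing that the width of the $x$-degree-$k$ cut stays at $n^2$: a naive re-layering inflates it by a factor on the order of $n-2k+1$, because a single Mahajan--Vinay node can be reached with several different accumulated $x$-degrees along different paths. Overcoming this requires the specific structure of the Mahajan--Vinay construction (each edge carries a single matrix entry, so the $x$-degree increment along an edge is determined combinatorially) together with the $I + M(x)$ normalization, which confines the $x$-degree-$0$ transitions to a structured diagonal family and pins the effective width of the relevant cut at the original $n^2$. Combining this width estimate with the ABP interpretation of $\brank$ then yields $\brank(p_{x_0}^{(2k)}) \leq n^2$, equivalently $\dc(p) = n \geq \sqrt{\brank(p_{x_0}^{(2k)})}$.
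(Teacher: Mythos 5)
Your overall strategy is the paper's: use $p(x_0) \neq 0$ to normalize the determinantal representation to $\alpha\det(I + M(x))$ with $M$ a homogeneous linear polynomial matrix (the paper's Lemma~\ref{lem_key_appendix}), observe that $p_{x_0}^{(2k)}$ is $\alpha$ times the degree-$2k$ part of $\det(I+M(x))$, exhibit an ABP for that degree-$2k$ part whose degree-$k$ layer has at most $n^2$ vertices, and cut at that layer to obtain a bi-polynomial decomposition (the paper's Fact~\ref{fct_ABP_brank}). The first and last steps of your sketch are sound.

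The gap is exactly where you flag it, and your proposed resolution does not close it. If you run the depth-$n$ Mahajan--Vinay determinant ABP on the matrix $I + M(x)$ and then re-layer the DAG by accumulated $x$-degree, the degree-$k$ slice is not contained in a single Mahajan--Vinay layer: a clow-sequence state $(h,v)$ at MV-layer $j$ can be reached with any accumulated degree from $0$ up to $j$, because each diagonal label $1 + M_{ii}$ contributes degree $0$ or degree $1$ nondeterministically. So the degree-$k$ cut a priori contains up to $(n-2k+1)\cdot O(n^2)$ vertices, and the observation that ``degree-$0$ transitions are confined to the diagonal'' does not repair this: the issue is not which edges can carry degree $0$, but that the same state is reachable at many different accumulated degrees. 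The clean fix --- and what the paper actually invokes as Theorem~\ref{thm_characteristic_poly} --- is to use the Mahajan--Vinay ABP that computes the coefficient of $\lambda^{n-2k}$ in $\det(M(x)+\lambda I)$ directly. That ABP is homogeneous of depth $2k$: layer $j$ consists of the states $(h,v)$ (head of the current clow, current vertex) reached after traversing exactly $j$ edges, each edge labelled by a single homogeneous linear entry of $M$, so layer index and $x$-degree coincide and every layer has at most $n^2$ vertices. Since that coefficient equals the degree-$2k$ part of $\det(I+M(x))$, cutting its $k$-th layer yields $\brank(p_{x_0}^{(2k)}) \leq n^2$ with no re-layering at all. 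With that substitution your argument coincides with the paper's proof.
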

We omit the case $x_0 \in {\rm Zeros}(p)$ for the simplicity of the proof.
We use the following lemma which is a variation of Lemma~\ref{lem_key1}.
\begin{lem}\label{lem_key_appendix}
	Let $p \in K[\xx]$ with $\dc (p)=n$. Then, for all $x_0 \in K^D \setminus {\rm Zeros}(p)$, there exist a linear polynomial matrix $A \in (K[\xx])^{n \times n}$ and $\alpha \in \RR$
	such that $p_{x_0}(x) = \alpha \det(A(x) + I)$.
	\begin{proof}
		From the definition of the determinantal complexity, there exists an affine polynomial matrix 
		$Q\in (K[\xx])^{n \times n}$ such that $p = {\det} ( Q)$. 
		In particular, given any $x_0 \in K^D \setminus {\rm Zeros}(p)$ it follows that $p_{x_0}(x) = {\det} (Q(x + x_0))$.
		Define $\alpha := \det (Q(x_0)) = p_{x_0}(0) \neq 0$.
		Since $Q$ is an affine polynomial matrix, we can represent $Q(x+x_0) = L(x) + Q(x_0)$ where $L\in (K[\xx])^{n \times n}$ is a linear polynomial matrix.
		We have 
		\[
		{\det}(Q(x +x_0))= \alpha \det(Q(x_0)^{-1}) {\det}(L(x) + Q(x_0)) = \alpha {\det} (Q(x_0)^{-1}L(x) +  I) .
		\]
		Since $Q(x_0)^{-1}L(x)$ is also a linear polynomial matrix, 
		we define $A(x) := Q(x_0)^{-1}L(x)$, and then the statement follows.
	\end{proof}
\end{lem}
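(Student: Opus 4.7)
The plan is to unpack the definition of $\dc(p) = n$ and rewrite the resulting determinant via a multiplicative shift by the constant matrix $Q(x_0)$. First I would invoke the hypothesis to obtain an affine polynomial matrix $Q \in (K[x])^{n \times n}$ with $p = \det(Q)$. Since substitution $x \mapsto x + x_0$ is a $K$-algebra homomorphism of $K[x]$, it commutes with $\det$, so immediately $p_{x_0}(x) = \det(Q(x+x_0))$.

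Next I would decompose $Q(x+x_0)$ into its $x$-linear and $x$-constant parts. Because each entry of $Q$ is affine in $x$, so is each entry of $Q(x+x_0)$, and its constant term (in $x$) is exactly the corresponding entry of $Q(x_0)$. Thus there is a uniquely determined linear polynomial matrix $L \in (K[x])^{n \times n}$ with $Q(x+x_0) = L(x) + Q(x_0)$. The hypothesis $x_0 \notin {\rm Zeros}(p)$ is then used to obtain $\det(Q(x_0)) = p(x_0) \neq 0$, so that $Q(x_0)$ is invertible.

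With invertibility secured, I apply multiplicativity of the determinant to pull $Q(x_0)$ outside:
\begin{align*}
\det\bigl(L(x) + Q(x_0)\bigr) = \det(Q(x_0)) \cdot \det\bigl(Q(x_0)^{-1} L(x) + I\bigr).
\end{align*}
Setting $\alpha := \det(Q(x_0)) = p_{x_0}(0)$ and $A(x) := Q(x_0)^{-1} L(x)$ then gives the required identity $p_{x_0}(x) = \alpha \det(A(x) + I)$. The matrix $A$ is still a linear polynomial matrix because left-multiplying a linear polynomial matrix by a constant matrix is a $K$-linear operation that preserves the absence of a constant term.

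There is essentially no technical obstacle: the whole argument is a one-line rearrangement of the determinant. The only conceptual point requiring care is that the factorization is only legitimate once $Q(x_0)$ is known to be invertible, and this is precisely where the non-vanishing assumption $x_0 \notin {\rm Zeros}(p)$ enters; without it, one could not extract $\alpha$ as a nonzero scalar or form $Q(x_0)^{-1}$.
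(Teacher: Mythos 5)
Your proposal is correct and follows essentially the same route as the paper: extract $Q$ from $\dc(p)=n$, split $Q(x+x_0)=L(x)+Q(x_0)$, use $x_0\notin{\rm Zeros}(p)$ to invert $Q(x_0)$, and factor the determinant to get $\alpha\det(Q(x_0)^{-1}L(x)+I)$. No gaps; your remark pinpointing where the non-vanishing hypothesis is used matches the paper's use of $\alpha=p_{x_0}(0)\neq 0$.
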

We formally define an ABP discussed in Section~\ref{sec_intro}.
\begin{df}[Nisan~\cite{Nisan91lowerbounds}, see also~\cite{Shpilka2009ACa}]
	An (homogeneous) arithmetic branching program (ABP) over $K[x]$ is a layered graph with $n+1$ layers
	as follows.
	The layers are labeled by $0,1,\dots,n$. The edges of the graph go from layer $i$ to layer $i+1$.
	Every edge $e$ is labeled by  a (homogeneous) linear polynomial $\ell_e \in K[x]$.
	Layer $0$ has only one vertex called the source, and layer $n$ has only one vertex called the sink.
	For every directed path from the source to the sink $\gamma = (e_1,e_2,\dots,e_n)$,
	define the polynomial $f_{\gamma}$ associated to $\gamma$ as $f_{\gamma} = \ell_{e_1}\ell_{e_2}\cdots \ell_{e_n}$. The polynomial computed by ABP is $\sum_{\gamma} f_{\gamma}$.
	
	For an ABP $\cA$,
	we define the width $w_k(\cA)$ of layer $k$ as the number of vertices in the layer $k$.
	Given a homogeneous polynomial $f$ with degree at least $k$, we denote by $w_k(f)$ the minimum $w_k(\cA)$ over ABPs $\cA$ which compute $f$.
\end{df}
The following is an easy observation.
\begin{fct}\label{fct_ABP_brank}
	For $f \in K[x]^{(2k)}$, it holds that $\brank(f) \leq w_k(f)$.
	\begin{proof}
		Suppose that an ABP $\cA$ computes $f$.
		Let $V$ be the set of vertices in the layer $k$ of $\cA$.
		For $v \in V$,
		let $\mathcal{R}_{v}$ and $\mathcal{R}_{v}'$ be the sets of path from the source to $v$ and 
		from $v$ to the sink, respectively.
		Then it holds that 
		\begin{align*}
			f = \sum_{v \in V} \left(\sum_{\gamma \in \mathcal{R}_{v}} f_{\gamma} \right) \left( \sum_{\gamma' \in \mathcal{R}_{v}} f_{\gamma'}  \right).
		\end{align*}
		Therefore we have $\brank(f) \leq |V| = w_k(f)$.
	\end{proof}
\end{fct}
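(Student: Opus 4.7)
The plan is to take an ABP $\cA$ that computes $f$ and achieves width $w_k(f)$ at layer $k$, and read off a bi-polynomial decomposition directly from the middle layer. Because $f$ is homogeneous of degree $2k$ and $\cA$ is a homogeneous ABP, every source-to-sink path must have length exactly $2k$, so $\cA$ has $2k$ edge layers and layer $k$ sits precisely in the middle.

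The key step is the path factorization. For each vertex $v$ in layer $k$, every source-to-sink path factors uniquely as a source-to-$v$ segment (of length $k$) concatenated with a $v$-to-sink segment (of length $k$). Define
\begin{equation*}
f_v := \sum_{\gamma \in \mathcal{R}_v} f_\gamma, \qquad g_v := \sum_{\gamma' \in \mathcal{R}'_v} f_{\gamma'},
\end{equation*}
where $\mathcal{R}_v$ and $\mathcal{R}'_v$ are the sets of source-to-$v$ and $v$-to-sink paths, respectively. Each edge carries a homogeneous linear polynomial, so $f_v, g_v \in K[x]^{(k)}$.

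Distributing the sum over source-to-sink paths according to which vertex of layer $k$ they pass through yields
\begin{equation*}
f = \sum_{v \in V_k} f_v \, g_v,
\end{equation*}
where $V_k$ is the set of vertices in layer $k$. This is a bi-polynomial decomposition of $f$ with $|V_k| = w_k(\cA)$ summands, so $\brank(f) \le w_k(\cA)$. Minimizing over all ABPs $\cA$ computing $f$ gives $\brank(f) \le w_k(f)$.

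There is no real obstacle here: the only thing to be careful about is making sure the homogeneity of $\cA$ is used to guarantee that both $f_v$ and $g_v$ land in $K[x]^{(k)}$ (in a non-homogeneous ABP one would have to truncate, which could be handled but would complicate the bookkeeping). The argument is essentially a rewriting of the ABP as a sum-of-products across the middle cut.
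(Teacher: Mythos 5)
Your proposal is correct and is essentially the same argument as the paper's: split each source-to-sink path at its unique layer-$k$ vertex, group by that vertex, and read off the decomposition $f=\sum_v f_v g_v$ with $w_k(\cA)$ terms. The extra care you take about homogeneity (ensuring $f_v,g_v\in K[x]^{(k)}$) is a point the paper leaves implicit, but it does not change the route.
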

The next statement is a well-known result.
\begin{thm}[Mahajan and Vinay~\cite{Mahajan1997Dca}]\label{thm_characteristic_poly}
	Let $A \in (K[\xx])^{n \times n}$ be a linear polynomial matrix, and $r \in [1,n-2]$.
	Then there exists an ABP $\cA$ over $K[x]$ 
	such that $\cA$ computes the coefficient of $\lambda^{n-r}$ in $\det(A(x) + \lambda I)$
	and satisfying $w_k (\cA) \leq n^2$ for all $k \in [1,r-1]$.
\end{thm}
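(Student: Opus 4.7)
The plan is to implement the clow-sequence method of Mahajan and Vinay~\cite{Mahajan1997Dca} in our ABP formalism. Regard $A(x)$ as the weighted adjacency matrix of the complete digraph on $[n]$, with edge $(u,v)$ carrying the linear polynomial weight $A(x)_{u,v}$. Recall that a \emph{clow} is a closed walk $h \to v_1 \to \cdots \to v_{\ell-1} \to h$ whose head $h$ is strictly less than every $v_i$, and an \emph{$r$-clow sequence} is a concatenation of clows of total edge-length $r$ with strictly increasing heads. The Mahajan-Vinay identity asserts that, up to a global sign, the coefficient of $\lambda^{n-r}$ in $\det(A(x)+\lambda I)$ equals $\sum_C (-1)^{r-c(C)} w(C)$ over all $r$-clow sequences $C$, where $c(C)$ is the number of clows and $w(C)$ is the product of edge weights.

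Granting this identity, I would build an ABP of depth $r$ whose source-to-sink paths biject with $r$-clow sequences, with the product of edge labels along each path equal to the signed weight of the corresponding sequence. Each middle layer $k$ with $1 \leq k \leq r-1$ has vertices $(h,v) \in [n]\times [n]$ encoding ``the current clow has head $h$ and we are presently at vertex $v$.'' From $(h,v)$ there is a \emph{continue} edge to $(h,v')$ with $v' > h$, labeled $A(x)_{v,v'}$; and a \emph{close-and-start} edge to $(h',h')$ with $h' > h$, labeled $\pm A(x)_{v,h}$, where the sign is chosen so that the running product accumulates the factor $(-1)^{r-c(C)}$ as the number of clows increases. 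The source in layer $0$ emits edges to $(h,v)$ that initialize the first clow, and the sink in layer $r$ is reached by a final close edge from $(h,v)$ labeled $A(x)_{v,h}$. Because $A(x)$ is a linear polynomial matrix, every edge label is a linear polynomial in $x$, and the width bound $w_k(\cA) \leq |[n]\times [n]| = n^2$ for $1 \leq k \leq r-1$ is then immediate. A global sign mismatch, if any, is corrected by negating a single edge label.

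The main obstacle is the Mahajan-Vinay cancellation lemma underlying the identity: among the $r$-clow sequences, those which are not disjoint unions of simple cycles must cancel in signed pairs, via a sign-reversing involution that identifies the first repeated vertex in the walk and swaps the order of the two ``loops'' meeting there. Once this lemma is established, the surviving terms sum to the signed $r \times r$ principal minors of $A(x)$, which is exactly the coefficient of $\lambda^{n-r}$ in $\det(A(x)+\lambda I)$, and the ABP above produces precisely this signed sum. The rest of the argument is bookkeeping on layer indexing and on absorbing the $(-1)^{r-c(C)}$ signs into the close-and-start edge labels.
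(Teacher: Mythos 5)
Your construction is the standard Mahajan--Vinay argument, which is exactly what the paper intends here: the theorem is stated with a citation and no proof, so there is nothing in the paper to diverge from, and your state space $(h,v)$ (head of current clow, current vertex) with continue/close-and-start transitions gives the claimed width bound $w_k(\cA)\leq n^2$ correctly. The one piece you defer --- the sign-reversing involution that cancels clow sequences which are not disjoint cycle covers --- is not a gap in substance: it is the standard cancellation lemma, and in fact the paper carries out precisely this involution in full detail (the two cases of splitting at a repeated vertex versus merging two clows at a shared vertex) in its own Lemma~\ref{lem_key3}, in the restricted setting of clow sequences through vertex $1$; your description of ``identify the first repeated vertex and swap the two loops meeting there'' is that same argument. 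So the proposal is correct and follows essentially the same route as the cited source and as the paper's parallel development in Section~\ref{sec_proof}; the remaining work is only the bookkeeping you already flag (source/sink edges for length-one clows and absorbing the $(-1)^{r-c(C)}$ signs into the closing edges).
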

Then Proposition~\ref{prop_sqrt} is proved as follows.
\begin{proof}[Proof of Proposition~\ref{prop_sqrt}]
	By Lemma~\ref{lem_key_appendix},
	there exists a linear polynomial matrix $A \in (K[\xx])^{n \times n}$ and $\alpha \in \RR$
	such that $p_{x_0}(x) = \alpha \det(A(x) + I)$.
	Then $p_{x_0}^{(2k)}$ is equal to the coefficient of $\lambda^{n-2k}$ in $ \alpha \det(A(x) + \lambda I)$.
	Then by Theorem~\ref{thm_characteristic_poly},
	there exists an ABP $\cA$ with $w_k(\cA) \leq n^2$ which computes $p_{x_0}(x)^{(2k)}$.
	By Fact~\ref{fct_ABP_brank}, we have $\brank(p_{x_0}^{(2k)}) \leq w_k(p_{x_0}^{(2k)}) \leq w_k(\cA) \leq n^2 = \dc(p) ^2$.
\end{proof}

\section{Lower bounds of $\dc(\perm_d)$ by b-rank: case $k=1$}\label{sec_relation_b-rank_detcomp}
Considering the case $k=1$ in Theorem~\ref{thm_main_dc},
we obtain lower bounds of $\dc(\perm_d)$ by the bi-polynomial rank.
We define $\Sigma_d \in {\rm Zeros}(\perm_{d})$ as follows:
\begin{align*}
\Sigma_d := \left( 
\begin{array}{cccc}
1 &\cdots &\cdots &1 \\
\vdots &\ddots &\ddots &\vdots \\
\vdots &\ddots &1 &1\\
1 &\cdots &1 &1-d 
\end{array}
\right).
\end{align*}
This is the same matrix appearing in the proof of 
Theorem~\ref{thm_mignon} in~\cite{Mignon2004Aqb}.

\subsection{Mignon-Ressayre bound from b-rank}\label{sec_demonstration}
This section is devoted to the demonstration of the bi-polynomial rank
as an extention of Hessian rank.
We give an alternative proof of the result of Mignon and Ressayre (Theorem~\ref{thm_mignon}).
By using the bi-polynomial rank, Theorem~\ref{thm_mignon} immediately follows from Theorem~\ref{thm_main_dc} as follows.
\begin{proof}[An alternative proof of Theorem~\ref{thm_mignon}]
	For any $p \in K[x]$ and $x_0 \in {\rm Zeros}(p)$, we have
	\[
	p_{x_0}^{(2)}(x) = \frac{1}{2} 
	\sum_{1 \leq i,j \leq  D}  x_i x_j
	\left.
	\left( \frac{\partial^2}{\partial x_i \partial x_j} p  \right)
	\right|_{x=x_0} .
	\]
	We define the Hessian $H_{p,x_0}=(h_{i,j})$ of $p$ at $x_0$ by $h_{i,j} := \left.
	\left( \frac{\partial^2}{\partial x_i \partial x_j} p  \right)
	\right|_{x=x_0}$.
	By definition, $\brank(p_{x_0}^{(2)})$ is equal to the minimum number $n$ of bilinear forms $(\sum_{l=1}^{D} b^m_{l}x_l )(\sum_{l=1}^{D} c^m_{l}x_l )$
	$(m=1,2,\dots,n)$ whose sum is equal to $p_{x_0}^{(2)}$.
	We define the rank one matrices $A_m = (a^m_{i,j})$ for $m=1,2,\dots,n$, by $a^m_{i,j} :=b^m_{i} c^m_{j} $.
	Let $A := \sum_{m=1}^{n} A_m$, and then it holds that $A + A^{\top} = H_{p,x_0}$. 
	Therefore we have $\brank(p_{x_0}^{(2)}) \geq \rank(A) \geq \frac{1}{2} \rank(H_{p,x_0})$.
	By Theorem~\ref{thm_mignon}, by putting $k=1$ it holds that
	\[
	\dc(p) \geq \brank(p_{x_0}^{(2)}) \geq \frac{1}{2} \rank(H_{p,x_0}).
	\]
	In the case of $p = \perm_d$, Mignon and Ressayre 
	proved $\rank(H_{\perm_d,\Sigma_d}) = d^2$.
	Thus Theorem~\ref{thm_mignon} follows from our Theorem~\ref{thm_main_dc}.
\end{proof}

\subsection{Lower bound of $\dc(\perm_d)$ over the field $\RR$}\label{subsec_newbound}
Theorem~\ref{thm_d2bound} improves the current best lower bound given by Mignon and Ressayre.
We present the proof in this section.
Given a symmetric matrix, we denote by a tuple $(n_+, n_-, n_0)$ the {\em signature} of the matrix, that is, the number of positive, negative, zero eigenvalues, respectively.
If symmetric matrices $S$ and $S'$ have the same signature, we denote $S \sim S'$.
By Sylvester's law of inertia, $S \sim S'$ if and only if
$S \sim T S T^{\top}$ 
for a nonsingular matrix $T$.
We use the next lemma.
\begin{lem}\label{lem_signature}
	Let $Q \in \Mat_n(\RR)$, $Q_{{\rm sym}} := Q + Q^{\top}$,
	and $(n_+, n_-, n_0)$ be the signature of $Q_{{\rm sym}}$.
	Then it holds that $\rank(Q) \geq \max\{n_+, n_-\}$.
	\begin{proof}
		Let $\lambda_1, \lambda_2,\dots,\lambda_{n_+}$ be the positive eigenvalues of $Q_{{\rm sym}}$, 
		and $\vv_1, \vv_2, \dots,\vv_{n_+}$ be the corresponding eigenvectors which are orthogonal to each other.
		Let $V_+ \subseteq \RR^n$ be the $n_+$-dimensional subspace spanned by $\vv_1, \vv_2, \dots,\vv_{n_+}$.
		For any nonzero vector $\uu = \sum_{i=1}^{n_+} a_i \vv_i \in V_+$ where $a_1,a_2,\dots,a_{n_+}$ $ \in \RR$,
		it holds that
		\[
		2\uu^{\top} Q \uu = \uu^{\top} Q_{{\rm sym}} \uu = \sum_{i=1}^{n_+} a_i^2 \vv_i^{\top} Q \vv_i =  \sum_{i=1}^{n_+} a_i^2 \lambda_i > 0.
		\]
		The above inequality shows that $Q \uu \neq 0$ for all nonzero vectors $\uu$ in $n_+$-dimensional space $ V_+$, and therefore $\rank (Q) \geq n_+$.
		The same argument is also true for eigenvectors with negative eigenvalues, 
		and it holds that $\rank (Q) \geq n_-$.
	\end{proof} 
\end{lem}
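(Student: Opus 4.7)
The plan is to exhibit, for each sign, an invariant-dimensional subspace on which $Q$ acts injectively, and then use $\rank(Q) = n - \dim \ker(Q)$.

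First, I would invoke the spectral theorem for $Q_{\rm sym}$: since $Q_{\rm sym}$ is real symmetric, $\RR^n$ decomposes orthogonally into the positive eigenspace $V_+$, the negative eigenspace $V_-$, and the kernel of $Q_{\rm sym}$, with $\dim V_+ = n_+$ and $\dim V_- = n_-$. The quadratic form $u \mapsto u^\top Q_{\rm sym} u$ is strictly positive on $V_+ \setminus \{0\}$ and strictly negative on $V_- \setminus \{0\}$.

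The key observation is that for every $u \in \RR^n$,
\[
u^\top Q u = \tfrac{1}{2}\bigl(u^\top Q u + (u^\top Q u)^\top\bigr) = \tfrac{1}{2}\, u^\top Q_{\rm sym} u,
\]
because $u^\top Q u$ is a scalar and hence equal to its own transpose. Consequently, on any nonzero $u \in V_+$ one has $u^\top Q u > 0$, which in particular forces $Qu \neq 0$; thus $\ker(Q) \cap V_+ = \{0\}$, so that $\dim Q(V_+) = n_+$ and therefore $\rank(Q) \geq n_+$. The same argument applied to $V_-$ gives $\rank(Q) \geq n_-$, and taking the maximum yields the claim.

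There is essentially no obstacle: the lemma is a short piece of linear algebra, and the only slightly subtle point is the identity $u^\top Q u = \tfrac{1}{2} u^\top Q_{\rm sym} u$, which lets one transfer the definiteness information from the symmetric part $Q_{\rm sym}$ to $Q$ itself. (The author's own proof, which expands a general $u \in V_+$ in an orthonormal eigenbasis, is a more explicit version of the same argument.)
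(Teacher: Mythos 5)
Your proof is correct and is essentially the same argument as the paper's: both reduce to the identity $\uu^{\top} Q \uu = \tfrac{1}{2}\uu^{\top} Q_{\rm sym} \uu$ and the strict definiteness of the quadratic form of $Q_{\rm sym}$ on its positive (resp.\ negative) eigenspace, concluding $\ker(Q) \cap V_{\pm} = \{0\}$. The only difference is presentational — you state the scalar-transpose identity abstractly, while the paper expands $\uu$ in an orthogonal eigenbasis — so there is nothing to change.
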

The proof of Theorem~\ref{thm_d2bound} is given as follows.
\begin{proof}[Proof of Theorem~\ref{thm_d2bound}]
	%
	From Theorem \ref{thm_main_dc}, for $k=1$ we obtain that $\dc(\perm_d) \geq \brank(\perm_{d,\Sigma_d}^{(2)})$ over $\RR$.
	We consider that a matrix $A = (a_{(i,j),(i',j')})$ of size $d^2$ is indexed by pairs of integers $(i,j), (i',j')$ where $i,j,i',j' \in [1,d]$.
	Since 
	\[
	{\perm}_{d,\Sigma_d}^{(2)}(x) = \frac{1}{2}\sum_{i,j,i',j' \in [1,d]} x_{i,j}x_{i',j'} \left.
	\left( \frac{\partial^2 {\perm}_d}{\partial x_{i,j} \partial x_{i',j'}} \right) 
	\right|_{x=\Sigma_d} ,
	\] 
	we define the Hessian matrix $H =(h_{(i,j),(i',j')})$ of $\perm_d$ at $\Sigma_d$ by the following equation.
	\[
	h_{(i,j),(i',j')} = h_{(i',j'),(i,j)} :=  \left.
	\left( \frac{\partial^2 {\perm}_d}{\partial x_{i,j} \partial x_{i',j'}} \right) 
	\right|_{x=\Sigma_d}.
	\]
	The corresponding optimization problem in Theorem~\ref{thm_to_opt} is equal to the following.
	\begin{align*}
	\begin{array}{ll}
	{\rm Minimize} \quad &\rank(Q) \\
	{\rm subject \ to} \quad & 
	Q + Q^{\top} = H,\\
	& Q \in \Mat_{d^2}(\RR).
	\end{array}
	\end{align*}
	Let $Q_{{\rm opt}}$ be an optimum solution of the above problem. 
	By Theorem~\ref{thm_to_opt} we have $ \brank({\perm}_{d,\Sigma_d}^{(2)}) = \rank(Q_{{\rm opt}})$.
	Let $(n_+, n_-, n_0)$ be the signature of $H \in \Sym_{d^2}$.
	Since $Q_{{\rm opt}} + Q_{{\rm opt}}^{\top} =H$ by Lemma~\ref{lem_signature} it follows that $\rank(Q_{{\rm opt}}) \geq n_-$.
	Therefore we obtain
	\[
	\dc({\perm}_d) \geq \brank({\perm}_{d,\Sigma_d}^{(2)})  = \rank(Q_{{\rm opt}}) 
	\geq n_-. 
	\]
	
	We are going to prove $n_- = (d-1)^2 + 1$.
	As in~\cite{Mignon2004Aqb}, $H$ can be calculated as
	\[
	H = (d-3)!
	\left( 
	\begin{array}{cccccc}
	O& B& \cdots& B& C\\
	B& \ddots& \ddots& \vdots& \vdots \\
	\vdots& \ddots& O& B& C\\
	B& \cdots& B& O& C\\ 
	C& \cdots& C& C& O\\ 
	\end{array}
	\right),
	\]
	where $B$ and $C$ are the following matrices of size $d$:
	\[
	B=
	\left(
	\begin{array}{ccccc}
	0& -2& \cdots& -2& d-2\\
	-2& \ddots&  \ddots& \vdots& \vdots\\
	\vdots& \ddots&  0& -2& d-2\\
	-2& \cdots& -2& 0& d-2\\
	d-2& \cdots& d-2& d-2& 0
	\end{array} 
	\right) ,
	C=(d-2)
	\left(
	\begin{array}{cccc}
	0& 1& \cdots& 1\\
	1& 0& \ddots& \vdots\\
	\vdots&  \ddots&  \ddots& 1\\
	1& \cdots& 1& 0
	\end{array}
	\right)
	. \ 
	\]
	Let $S_d$ be the symmetric matrix of size $d$ defined as follows.
	\[
	S_d :=
	\left(
	\begin{array}{cccc}
	0& 1& \cdots& 1\\
	1& 0& \ddots& \vdots\\
	\vdots&  \ddots&  \ddots& 1\\
	1& \cdots& 1& 0
	\end{array}
	\right)
	.
	\]
	Then it holds that $B \sim -S_d$ and $C \sim S_d$.
	Let $I_d$ be the identity matrix of size $d$.
	The signature of $S_d$ is $(1,d-1,0)$,
	since the rank of the matrix $(S_d + I_d)$ is one with the nonzero eigenvalue $d$.
	Define a nonsingular matrix $T$ of size $d^2$ as follows.
	\[
	T:=
	\left(
	\begin{array}{ccccc}
	
	I_d& O& \cdots& \cdots& O\\
	O& I_d&  \ddots& \ddots& \vdots\\
	\vdots& \ddots&  \ddots& \ddots& \vdots\\
	O& O& \cdots& I_d& O\\
	-\frac{1}{d-2}CB^{-1}& -\frac{1}{d-2}CB^{-1}& \cdots& -\frac{1}{d-2}CB^{-1}& I_d
	\end{array} 
	\right) .
	\]
	Then $H \sim T H T^{\top}$, where
	\[
	TQ_{{\rm sym}} T^{\top}=
	\left(
	\begin{array}{ccccc}
	O& B& \cdots& O& O\\
	B& O& \ddots& \vdots& \vdots \\
	\vdots& \ddots& \ddots& B& O\\
	B& \cdots& B& O& O \\
	O& \cdots& \cdots& O& -\frac{d-1}{d-2}CB^{-1}C\\
	\end{array} 
	\right) .
	\]
	Let $H'$ be the upper-left principal submatrix of $T H T^{\top}$ of size $d(d-1)$,
	which is represented as follows. 
	\[
	Q' =
	\left(
	\begin{array}{cccc}
	O& B& \cdots& B\\
	B& O& \ddots& \vdots\\
	\vdots& \ddots& \ddots& B\\
	B& \cdots& C& O 
	\end{array} 
	\right) .
	\]
	Denote by $(l_+, l_-, l_0)$ and $(m_+, m_-, m_0)$ the signatures of $(-CB^{-1}C)$ and $H'$, respectively. Then it holds that $n_- = l_- + m_-$.
	Since $-CB^{-1}C \sim - (BC^{-1}) (CBC)(BC^{-1})^{\top} = -B \sim S_d$, 
	$l_-$ is equal to $d-1$.
	On the other hand, it holds that $H' = S_{d-1} \otimes C$, where $\otimes$ denote the Kronecker product.
	Since the set of eigenvalues of $S_{d-1} \otimes B$ consists of the products of all pair of eigenvalues of $S_{d-1}$ and $B$, the signature of $S_{d-1} \otimes B$ is $(2d - 3, (d-2)(d-1)+1,0)$. Therefore we have $n_- = (d-1) + ((d-2)(d-1) + 1) = (d-1)^{2} + 1$. 
\end{proof}

\section{Toward strong lower bounds via concave minimization}\label{sec_cnj}
We formulate the bi-polynomial rank as 
the minimum matrix-rank over an affine subspace of matrices in Section~\ref{section_CB}.
Unfortunately, few results are known 
for giving theoretical lower bounds for the rank minimization problem.
In our case, the calculation of such a minimum rank is
still difficult for $k \geq 2$.
We propose an approach to bound the minimum rank below
by using the framework of the concave minimization.
In this section, we fix the field $K = \RR$.
\subsection{Concave minimization for bounding minimum rank below}\label{subsec_Cooncavemin}
The object of the concave minimization is
to minimize a concave function over a convex set.
This setting is studied in the area of global optimization~\cite{Pardalos1986Mfg}.
We use this framework to obtain lower bounds of the minimum rank 
over a subset of positive semidefinite matrices.

As in Section~\ref{sec_intro}, for $Y \in \Sym_n$ and $l \in [1, n]$, we denote by $\mu_{l}(Y)$ the sum of the smallest $l$ eigenvalues of $Y$.
For $\cX \subseteq \Sym_n$,
we define $\minrank(\cX) := \min_{X \in \cX} \rank(X)$.
Then the next statement is immediate from the definition of positive semidefinite matrices.
\begin{lem}\label{lem_rank_eigen}
	Let $\cX \subseteq \Psd_n$, and $r \in \NN$.
	Then $\minrank(\cX) > r$ if and only if 
	$\mu_{n-r}(X) > 0$ for all $X \in \cX$.
\end{lem}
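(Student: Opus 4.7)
The statement is essentially a characterization of rank in terms of eigenvalue sums, restricted to positive semidefinite matrices. The plan is to reduce to a pointwise equivalence: for each fixed $X \in \Psd_n$, the inequality $\rank(X) > r$ holds if and only if $\mu_{n-r}(X) > 0$. Quantifying this equivalence over $\cX$ gives the lemma immediately, since $\minrank(\cX) > r$ is by definition the statement that $\rank(X) > r$ for every $X \in \cX$.

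For the pointwise equivalence, first I would fix $X \in \Psd_n$ and order its eigenvalues as $0 \leq \lambda_1 \leq \lambda_2 \leq \cdots \leq \lambda_n$. The key observation, which uses crucially that $X$ is positive semidefinite, is that $\rank(X)$ equals the number of strictly positive eigenvalues, i.e.\ the number of indices $i$ with $\lambda_i > 0$. Equivalently, $\rank(X) > r$ iff at least $r+1$ of the $\lambda_i$ are positive iff $\lambda_{n-r} > 0$.

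Next I would connect this to $\mu_{n-r}(X) = \lambda_1 + \lambda_2 + \cdots + \lambda_{n-r}$. Since each summand is nonnegative, the sum is positive iff at least one summand is positive, and the largest summand is $\lambda_{n-r}$, so $\mu_{n-r}(X) > 0$ iff $\lambda_{n-r} > 0$. Combined with the previous step this gives $\rank(X) > r \Longleftrightarrow \mu_{n-r}(X) > 0$ for a single $X$, and universal quantification over $\cX$ then yields both directions of the lemma.

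I don't expect a real obstacle here; the only subtle point is that positive semidefiniteness is used twice, once to identify $\rank(X)$ with the count of positive eigenvalues (so that zero eigenvalues correspond exactly to rank deficiency, with no cancellation between positive and negative parts), and once to guarantee that the partial sum $\mu_{n-r}(X)$ vanishes precisely when $\lambda_1 = \cdots = \lambda_{n-r} = 0$. Without the PSD hypothesis both of these implications fail, so the proof should make explicit where nonnegativity of the spectrum is invoked.
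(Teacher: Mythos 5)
Your argument is correct and is exactly the reasoning the paper has in mind: it offers no written proof, declaring the lemma ``immediate from the definition of positive semidefinite matrices,'' and the pointwise equivalence $\rank(X)>r \Leftrightarrow \lambda_{n-r}>0 \Leftrightarrow \mu_{n-r}(X)>0$ for PSD $X$ is precisely that immediate content. Your explicit note of where nonnegativity of the spectrum is used is a fine addition but not a departure from the paper.
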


This statement suggests the way to solve the rank minimization problem
over positive semidefinite matrices
by minimizing $\mu_{l}$ over the feasible region.
The computation of $\mu_l$ is formulated as
the optimum solution of a semidefinite programming.
\begin{prop}[{See~\cite[Section 4.1]{Alizadeh1995Ipm}}]\label{prop_eigens_sdp}
	Let $A \in \Sym_n$ and $l \in [1,n]$.
	Then $\mu_{l}(A)$ is equal to the optimum value of the following 
	problem:
	\begin{align*}
	\begin{array}{ll}
	{\rm Minimize} \quad & \tr(AX) \\
	{\rm subject \ to} \quad &  \tr(X) = l, \\
	& X, I-X \in \Psd_{n}.
	\end{array}
	\end{align*}	
	\begin{proof}	
		Since $A$ is a symmetric matrix, $A$ is diagonalizable by some orthogonal matrix $V$,
		as $V A V^{\top} =: \tilde{A}$.
		Let $\tilde{X} := V X V^{\top}$,
		and then the replacement of $A,X$ by $\tilde{A}, \tilde{X}$ does not change the optimum value. Therefore, without loss of generality, we can assume that $A$ is a diagonal matrix with diagonal entries $\lambda_1 \leq  \lambda_2 \leq \dots \leq \lambda_n$.
		Then the optimum value is attained by such $X$ that the first $l$ diagonal entries are $1$, and the other entries are $0$.
		It holds that the optimum value is $\sum_{i=1}^l \lambda_i =  \mu_{l}(A)$.
	\end{proof} 
\end{prop}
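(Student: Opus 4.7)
The plan is to reduce this semidefinite program to a linear program on the diagonal of $X$ by simultaneously diagonalizing $A$. The key observation is that the symmetric matrix $A$ admits an orthogonal diagonalization $V A V^\top = \diag(\lambda_1, \dots, \lambda_n)$ with $\lambda_1 \leq \lambda_2 \leq \cdots \leq \lambda_n$, and the change of variables $\tilde X := V X V^\top$ preserves every piece of the feasible region: we have $\tr(\tilde X) = \tr(X)$, and since $V$ is orthogonal, $\tilde X \in \Psd_n \iff X \in \Psd_n$ and $I - \tilde X \in \Psd_n \iff I - X \in \Psd_n$. The objective also transforms cleanly as $\tr(AX) = \tr((VAV^\top)\tilde X)$. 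Hence I would reduce at the outset to the case where $A = \diag(\lambda_1,\dots,\lambda_n)$.

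Once $A$ is diagonal, the objective becomes $\sum_{i=1}^n \lambda_i X_{ii}$ and depends only on the diagonal of $X$. Pinching with the standard basis vectors $e_i$ shows that the constraints $0 \preceq X \preceq I$ force $X_{ii} = e_i^\top X e_i \in [0,1]$; conversely, any $y \in [0,1]^n$ with $\sum_i y_i = l$ is realized as the diagonal of a feasible point, namely $\diag(y_1,\dots,y_n)$ itself. So the SDP has the same optimal value as the linear program of minimizing $\sum_i \lambda_i y_i$ over the polytope $\{\, y \in [0,1]^n : \sum_i y_i = l \,\}$. Since the $\lambda_i$ are in nondecreasing order, this LP is solved by the greedy choice $y_i = 1$ for $i \leq l$ and $y_i = 0$ otherwise, giving value $\sum_{i=1}^l \lambda_i = \mu_l(A)$; optimality follows from a one-line rearrangement using $\sum_{i>l} y_i = \sum_{i\leq l}(1-y_i)$ together with $\lambda_i \geq \lambda_l \geq \lambda_j$ for $i>l$ and $j\leq l$.

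There is no real obstacle here: the proof is essentially the observation that the SDP is invariant under orthogonal conjugation of the pair $(A,X)$, and that once $A$ is diagonal the off-diagonal entries of $X$ are inert in both the objective and the trace constraint. The only mild point to verify carefully is that the orthogonal change of variables preserves $I - X \in \Psd_n$, which is immediate from $V I V^\top = I$.
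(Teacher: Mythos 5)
Your proof is correct and follows essentially the same route as the paper: diagonalize $A$ by an orthogonal conjugation that leaves the feasible set and objective invariant, reduce to diagonal $A$ with ordered eigenvalues, and identify the greedy minimizer. In fact your version is slightly more complete, since the pinching bound $X_{ii}\in[0,1]$ and the rearrangement argument supply the optimality verification that the paper's proof merely asserts.
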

For latter use, we prepare the next statement.
\begin{cor}\label{cor_dual}
	Let $Y \in \Sym_n$. Then $\mu_l (Y)$ is at least $l z - \tr(Z)$
	for $Z \in \Psd_n$ and $z \in \RR$ satisfying $(Y + Z - zI) \in \Psd_n$.
	\begin{proof}
		Observe that the following optimization problem is the dual of the problem in Proposition~\ref{prop_eigens_sdp}.
		\begin{align*}
		\begin{array}{ll}
		{\rm Maximize} \quad & lz - \tr(Z) \\
		{\rm subject \ to} \quad &  z \in \RR, \\
		& Z, Y + Z - zI \in \Psd_{n}.
		\end{array}
		\end{align*}
		By the weak duality of semidefinite programming, for any feasible solution $(Z,z) \in \Psd_n \times \RR$,
		the objective value $l z - \tr(Z)$ is at most $\mu_l (Y)$.
	\end{proof}
\end{cor}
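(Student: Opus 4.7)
The plan is to recognize the problem in the proof as the Lagrangian dual of the semidefinite program from Proposition~\ref{prop_eigens_sdp} and then apply weak duality, which yields the stated inequality unconditionally.

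I would start by forming the Lagrangian of the primal SDP with a scalar multiplier $z \in \RR$ attached to the equality constraint $\tr(X) = l$ and with PSD multipliers $Z_1, Z_2 \in \Psd_n$ attached to the cone constraints $X \in \Psd_n$ and $I - X \in \Psd_n$, namely
\begin{align*}
L(X; z, Z_1, Z_2) &= \tr(YX) - z\bigl(\tr(X) - l\bigr) - \tr(Z_1 X) - \tr\bigl(Z_2(I - X)\bigr) \\
&= \tr\bigl((Y - zI - Z_1 + Z_2) X\bigr) + lz - \tr(Z_2).
\end{align*}
The infimum over $X \in \Sym_n$ is finite only when $Y - zI - Z_1 + Z_2 = 0$, which forces $Z_1 = Y + Z_2 - zI$. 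Renaming $Z := Z_2$ and encoding the remaining constraint $Z_1 \in \Psd_n$ as $Y + Z - zI \in \Psd_n$, one recovers exactly the dual program displayed in the proof, with objective $lz - \tr(Z)$ and variables $Z \in \Psd_n$, $z \in \RR$.

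Weak duality of semidefinite programming then asserts that the objective value at any feasible dual point is a lower bound for the primal optimum, which by Proposition~\ref{prop_eigens_sdp} equals $\mu_l(Y)$. This gives $\mu_l(Y) \geq lz - \tr(Z)$ for every pair $(Z, z)$ satisfying the hypotheses, which is precisely the corollary. There is no genuine obstacle here; I would simply note that we are using weak rather than strong duality, so no Slater-type constraint qualification needs to be verified, and the inequality holds verbatim for every admissible $(Z,z)$.
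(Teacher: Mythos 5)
Your proposal is correct and follows essentially the same route as the paper: identify the displayed program as the SDP dual of the one in Proposition~\ref{prop_eigens_sdp} and invoke weak duality. The only difference is that you explicitly derive the dual via the Lagrangian (which the paper leaves as an ``observe''), and your derivation checks out.
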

The concavity of $\mu_{l}$ is proved as follows.
\begin{cor}[{See~\cite[Section 4.1]{Alizadeh1995Ipm}}]\label{cor_eigens_concave}
	For $l \in [1,n]$, $\mu_{l}: \Sym_n \to \RR$ is a concave function.
	\begin{proof}
		Let $X, Y \in \Sym_n$. Denote by $\prob(Y)$ the optimization problem in Proposition~\ref{prop_eigens_sdp} for $Y$.
		Then an optimum solution of $\prob(\frac{X+Y}{2})$ is a feasible solution of both $\prob(X)$ and $\prob(Y)$,
		and therefore the optimum value of $\prob(\frac{X+Y}{2})$ is at least
		the average of optimum values of $\prob(X)$ and $\prob(Y)$.
		By Proposition~\ref{prop_eigens_sdp}, this indicates that
		$\frac{1}{2}(\mu_l(X) + \mu_l(Y)) \leq  \mu_l(\frac{X+Y}{2})$.
	\end{proof}
\end{cor}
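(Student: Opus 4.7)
The plan is to exploit Proposition~\ref{prop_eigens_sdp}, which expresses $\mu_l(Y)$ as the optimum value of a semidefinite program whose feasible region does not depend on $Y$; the objective $\tr(YX)$ is linear in $Y$ for each fixed feasible $X$, so $\mu_l$ is a pointwise infimum of linear functions, hence concave.

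More concretely, I would proceed as follows. Let $\mathcal{F} := \{X \in \Sym_n \mid \tr(X) = l,\ X \in \Psd_n,\ I - X \in \Psd_n\}$. By Proposition~\ref{prop_eigens_sdp}, for every $Y \in \Sym_n$ we have
\[
\mu_l(Y) = \min_{X \in \mathcal{F}} \tr(YX).
\]
Now fix $Y_1, Y_2 \in \Sym_n$ and $\lambda \in [0,1]$. Pick any $X^\star \in \mathcal{F}$ that attains the minimum in the SDP associated with $\lambda Y_1 + (1-\lambda) Y_2$. Since $X^\star$ is also feasible for the SDPs of $Y_1$ and of $Y_2$, we have the upper bounds $\mu_l(Y_1) \leq \tr(Y_1 X^\star)$ and $\mu_l(Y_2) \leq \tr(Y_2 X^\star)$. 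Combining these two inequalities with linearity of the trace in $Y$ gives
\[
\lambda \mu_l(Y_1) + (1-\lambda) \mu_l(Y_2) \leq \tr\bigl((\lambda Y_1 + (1-\lambda) Y_2) X^\star\bigr) = \mu_l(\lambda Y_1 + (1-\lambda) Y_2),
\]
which is the definition of concavity.

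There is no real obstacle here; the statement is essentially a corollary of the fact that the min of a family of affine functions is concave, and Proposition~\ref{prop_eigens_sdp} has already done the substantive work by providing such a representation with a $Y$-independent feasible set. The only point requiring a hint of care is to make sure one optimizes the composite problem first and then substitutes into the two single problems (rather than the reverse), so that the feasibility of $X^\star$ in both $\mathrm{prob}(Y_1)$ and $\mathrm{prob}(Y_2)$ yields upper bounds on $\mu_l(Y_1)$ and $\mu_l(Y_2)$.
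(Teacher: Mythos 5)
Your proof is correct and follows essentially the same route as the paper: both use the fact that Proposition~\ref{prop_eigens_sdp} represents $\mu_l(Y)$ as the minimum of the linear functionals $Y \mapsto \tr(YX)$ over a $Y$-independent feasible set, and both substitute the optimizer of the combined problem into the two individual problems to get the concavity inequality. The only (cosmetic) difference is that you verify concavity for an arbitrary convex combination $\lambda Y_1 + (1-\lambda)Y_2$, whereas the paper checks only the midpoint case $\lambda = 1/2$; your version is marginally more complete.
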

In the theory of the concave minimization,
the outer approximation approach~\cite{Kelley1960tcp,Tuy1983Ooa} (see also~\cite{Pardalos1986Mfg})
obtain a lower bound of the minimum of a given concave function
by approximating the feasible region from outside.
Given a set $\cY \subseteq \Sym_n$,
we denote by $\conv(\cY)$ the convex hull of $\cY$.
In general, given a concave function $f$ over $\cY$,
it holds that $\min_{Y \in \cY} f(Y) = \min_{Y \in \conv(\cY)} f(Y)$.
Therefore the next statement holds.
\begin{thm}\label{thm_concave_bound}
	Let $\cX \subseteq \Psd_n$ and $r \in \NN$. 
	If there exists $\cY \subseteq \Sym_n$ satisfying
	the following property, then $\minrank(\cX) > r$.
	
	(i) $\cX \subseteq \conv(\cY)$.
	 
	(ii) $\mu_{n-r}(Y) > 0$ for all $Y \in \cY$.
	\begin{proof}
		Since $\mu_{n-r}$ is a concave function,
		$0 < \min_{Y \in \cY} \mu_{n-r}(Y) = \min_{Y \in \conv(\cY)} \mu_{n-r}(Y)
		 \leq  \min_{X \in \cX} \mu_{n-r}(X)$.
		By Lemma~\ref{lem_rank_eigen}, 
		$\mu_{n-r}(X) >0$ for all $X \in \cX$ implies $\minrank(\cX) > r$.
	\end{proof}
\end{thm}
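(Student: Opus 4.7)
The plan is to chain together the three pieces that have just been assembled: the eigenvalue characterization of the rank (Lemma~\ref{lem_rank_eigen}), the concavity of $\mu_{n-r}$ (Corollary~\ref{cor_eigens_concave}), and the elementary fact that a concave function on a convex hull attains its minimum at the generating set. Concretely, the goal is to show $\mu_{n-r}(X) > 0$ for every $X \in \cX$, since Lemma~\ref{lem_rank_eigen} immediately converts this into $\minrank(\cX) > r$.

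First I would fix an arbitrary $X \in \cX$ and use hypothesis (i) to write $X \in \conv(\cY)$. By Carath\'eodory's theorem, or simply by unwinding the definition of the convex hull, $X = \sum_{j} \lambda_j Y_j$ for finitely many $Y_j \in \cY$ with $\lambda_j \geq 0$ and $\sum_j \lambda_j = 1$. Then I would apply Jensen's inequality for the concave function $\mu_{n-r}$, established in Corollary~\ref{cor_eigens_concave}, to conclude
\[
\mu_{n-r}(X) \;=\; \mu_{n-r}\!\left(\sum_j \lambda_j Y_j\right) \;\geq\; \sum_j \lambda_j \, \mu_{n-r}(Y_j).
\]
Since each $\mu_{n-r}(Y_j) > 0$ by hypothesis (ii), and the coefficients $\lambda_j$ are nonnegative and not all zero, the right-hand side is strictly positive; hence $\mu_{n-r}(X) > 0$.

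Finally, I would invoke Lemma~\ref{lem_rank_eigen} in the contrapositive direction: positivity of $\mu_{n-r}$ throughout $\cX$ forces $\minrank(\cX) > r$. The only subtlety I foresee is ensuring that strict positivity survives the convex combination, but this is automatic because a nonnegative combination of strictly positive numbers, weighted by coefficients summing to $1$, is strictly positive. There is no real obstacle here once concavity of $\mu_{n-r}$ is in hand; the heart of the argument is really Corollary~\ref{cor_eigens_concave}, which in turn rests on the semidefinite programming representation in Proposition~\ref{prop_eigens_sdp}.
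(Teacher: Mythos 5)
Your proof is correct and takes essentially the same route as the paper: both reduce the claim to showing $\mu_{n-r}(X)>0$ for every $X\in\cX$ by exploiting the concavity of $\mu_{n-r}$ (Corollary~\ref{cor_eigens_concave}) together with hypothesis (i), and then invoke Lemma~\ref{lem_rank_eigen}. The only difference is cosmetic --- you unwind the convex-hull step pointwise via Jensen's inequality on a finite convex combination, whereas the paper cites the general fact that $\min_{Y \in \cY} \mu_{n-r}(Y) = \min_{Y \in \conv(\cY)} \mu_{n-r}(Y)$ for a concave function; your pointwise version is, if anything, slightly more careful when $\cY$ is infinite and the minimum need not be attained.
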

%

%
%
%
%
To utilize Theorem~\ref{thm_concave_bound} for lower bounds of the bi-polynomial rank,
we prove Proposition~\ref{prop_concave_bound}.
\begin{proof}[Proof of Proposition~\ref{prop_concave_bound}]
	By Corollary~\ref{cor_brank_psd_rankmin}, $\brank(p) \geq \frac{1}{2} \minrank(\cX_p \cap \Psd_{n})$.
	Then by Theorem~\ref{thm_concave_bound},
	$\minrank(\cX_p \cap \Psd_{n}) > r$, and therefore the statement holds.
\end{proof}
Corollary~\ref{cor_dual} may help
the verification $\mu_{n-r}(Y) >0$
as follows.
\begin{cor}\label{cor_conjecture}
	Let $p \in K[x]^{(2k)}$ and $r \in \NN$.
	If there exists $\cY \subseteq \cX_p$ satisfying the following property, then $\brank(p) > r /2$.
	
	(i) $\cX_p \cap \Psd_{2s_k} \subseteq \conv(\cY)$.
	
	(ii) For all $(Y_1,Y_2) \in \cY$, 
	there exists $(Z_1,Z_2,z) \in \Psd_{s_k} \times \Psd_{s_k} \times \RR$ 
	such that $(Y_1 + Z_1 - zI), (Y_2 + Z_2 - zI) \in \Psd_{s_k}$
	and $ (2 s_k - r) z - \tr(Z_1 + Z_2) > 0$.
	\begin{proof}
		The statement directly follows from Proposition~\ref{prop_concave_bound} and 
		Corollary~\ref{cor_dual}.
	\end{proof}
\end{cor}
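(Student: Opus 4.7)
The plan is to reduce the statement directly to Proposition~\ref{prop_concave_bound} by exhibiting a block-diagonal dual certificate. Recall that the embedding in Section~\ref{section_CB} sends a pair $(Y_1, Y_2)$ of $s_k \times s_k$ matrices to the block-diagonal matrix $Y = \diag(Y_1, Y_2)$ of size $n := 2 s_k$, and that under this embedding $\cX_p$ and $\cX_p \cap \Psd_n$ are the feasible sets from Corollary~\ref{cor_brank_psd_rankmin}. So a set $\cY \subseteq \cX_p$ satisfying hypothesis (i) of the corollary already satisfies hypothesis (i) of Proposition~\ref{prop_concave_bound}. It therefore suffices to verify that hypothesis (ii) of the corollary implies $\mu_{n-r}(Y) > 0$ for every $Y \in \cY$.

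First I would fix an arbitrary $(Y_1, Y_2) \in \cY$ and take the corresponding triple $(Z_1, Z_2, z)$ provided by hypothesis (ii). Setting $Z := \diag(Z_1, Z_2) \in \Sym_n$, block diagonality gives $Z \in \Psd_n$ (since each $Z_i \in \Psd_{s_k}$) and
\[
Y + Z - z I \;=\; \diag(Y_1 + Z_1 - z I,\; Y_2 + Z_2 - z I) \;\in\; \Psd_n,
\]
again because a block diagonal symmetric matrix is positive semidefinite if and only if each block is. Thus $(Z, z)$ is a feasible solution of the dual program in the proof of Corollary~\ref{cor_dual} for the matrix $Y$ with $l = n - r = 2 s_k - r$.

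Applying Corollary~\ref{cor_dual} with this feasible pair yields
\[
\mu_{n-r}(Y) \;\geq\; (n-r)\, z - \tr(Z) \;=\; (2 s_k - r)\, z - \tr(Z_1 + Z_2) \;>\; 0,
\]
where the final strict inequality is precisely hypothesis (ii) of the corollary. Since $(Y_1, Y_2) \in \cY$ was arbitrary, this establishes hypothesis (ii) of Proposition~\ref{prop_concave_bound}, and the proposition then delivers $\brank(p) > r/2$, as required.

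There is no real obstacle here: the argument is essentially bookkeeping that assembles the two $s_k$-sized dual certificates into a single $n$-sized certificate, and then quotes the already-proved Proposition~\ref{prop_concave_bound} and Corollary~\ref{cor_dual}. The only point worth checking carefully is that block diagonality is respected by both the positive semidefiniteness conditions and the trace, which is immediate.
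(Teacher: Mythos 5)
Your proof is correct and is exactly the argument the paper intends: it assembles the two $s_k$-sized dual certificates into a block-diagonal certificate for the $2s_k$-sized matrix, applies Corollary~\ref{cor_dual} to get $\mu_{n-r}(Y)>0$, and then invokes Proposition~\ref{prop_concave_bound}. The paper's own proof simply states that the result ``directly follows'' from these two results, so you have merely made the same reduction explicit.
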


\subsection{An explicit representation of $\cX_{\perm_{d,\Sigma_d}^{(2k)}}$}\label{subsec_explicit}
The previous section discusses a general framework 
for giving lower bounds of the bi-polynomial rank.
In the framework, to calculate $\brank(p)$ of $p \in K[x]^{(2k)}$, 
we consider the minimum of the concave function $\mu_{2s_k - r}$ over $\cX_p \cap \Psd_{2s_k}$.
Our final target is to obtain lower bounds of $\dc(\perm_d)$.
In this section,
we fix $p = \perm_{d,\Sigma_d}^{(2k)}$,
and give an explicit representation of a projection $\cZ_{2k}$ of $\cX_{p}$.
$p$ is a multilinear polynomial,
and to extract this feature,
we define the subset ${\cJ_{k,d^2}} := \cI_{k,d^2} \cap \{0,1 \}^{d^2}$ 
of $d^2$-tuples.
Define $d' := d-1$.
Observe that 
$\Sigma_d$ has good symmetry except for $d$th row/column.
For $I = (i_{1,1},i_{1,2},\dots,i_{d',d'})  \in \cJ_{k,{d'}^2}$,
we define $\iota(I) \in \cJ_{k,d^2}$ by the insertion of zeros
into the entries not in $\cJ_{k,{d'}^2}$.
More concretely,
\[
\iota(i_{1,1},i_{1,2},\dots,i_{d',d'}) := (i_{1,1},\dots,i_{1,d-1},0,i_{2,1},\dots,i_{d',d'},0,\dots,0).
\]
Let $t_k$ be the cardinality of $\cJ_{k,{d'}^2}$.
Then we define the projection $\pi: \Sym_{s_k} \to \Sym_{t_k}$ by 
$\pi(Y)_{I,J} := \alpha Y_{\iota(I), \iota(J)}$ for $I,J \in \cJ_{k,{d'}^2}$,
where $\alpha := - \frac{1}{2k(d-2k-1)!}$ is a constant.
We define $\cZ_{2k} \subseteq \Sym_{t_k} \times \Sym_{t_k}$ as a projection of $\cX_{p}$ as follows:
\begin{align*}
	\cZ_{2k} := \{ (\pi(X_+), \pi(X_-))  \mid (X_+,X_-) \in \cX_{p}  \}.
\end{align*}
Then it holds that 
\begin{align*}
\minrank(\cZ_{2k} \cap \Psd_{2t_k}) 
&\leq \minrank(\cX_{p} \cap \Psd_{2s_k}) \\
&\leq \minrank(\cZ_{2k} \cap \Psd_{2t_k}) + (s_k - t_k),
\end{align*}
where $s_k - t_k = O(d^{2k-2})$.
Hence $\minrank(\cX_{p} \cap \Psd_{2s_k}) = \Omega(d^{2k})$
if and only if $\minrank(\cZ_{2k} \cap \Psd_{2t_k}) = \Omega(d^{2k})$.
We are going to give an explicit representation of $\cZ_{2k}$.
In $p$,
no monomial with repetition of a row/column index appear.
To express this,
we classify $\cJ_{2k,{d'}^2}$ into $\cH_1$ and $\cH_0$ as follows.
We define $\cH_1$ as the set of tuples 
$(H_{1,1},H_{1,2},\dots,H_{d',d'}) \in \cJ_{2k,{d'}^2}$
satisfying $\sum_{j' =1}^{d'} H_{i,j'} \leq 1$ and $\sum_{i' =1}^{d'} H_{i',j} \leq 1$
for $i,j \in [1,d']$,
and $\cH_0 := \cJ_{2k,{d'}^2} \setminus \cH_1$.
%
Then $\cZ_{2k}$ is given as the set of pairs $(U,V)$ of matrices in $\Sym_{t_k}$
satisfying linear equations for all $H \in \cJ_{2k,{d'}^2}$:
\begin{align*}
\sum_{\substack{I,J \in \cJ_{k,{d'}^2} \\ I + J = H }} (u_{I,J} - v_{I,J}) = 
\begin{cases}
1 , \ H \in \cH_1 \\ 
0 , \ H \in \cH_0
\end{cases}
\end{align*}
Observe that this affine space $\cZ_{2k}$
is represented by simple linear equations
with coefficients in $\{0, \pm 1 \}$.
\section{Proof of Theorem~\ref{thm_main_dc}}\label{sec_proof}
A {\em linear polynomial matrix} over $x_1,x_2,\dots,x_D$ of size $n$ is an $n \times n$ matrix $A(x) = (a_{i,j}(x)) \in (K[\xx])^{n \times n}$, where each element $a_{i,j}(x)$ is a (homogeneous) linear polynomial for $1 \leq i,j \leq n$.
Denote by $\Lambda_n^r$ the diagonal matrix of size $n$ with diagonal entries $(0,\dots,0,\underbrace{1,\dots,1}_{r})$. 
For $\alpha,\beta \in \ZZ$ with $\alpha \leq \beta$,
we denote $\{ \alpha,\alpha+1,\dots,\beta \}$ by $[\alpha,\beta]$.
\begin{lem}\label{lem_key1}
Let $p \in K[\xx]$ with $\dc (p)=n$. Then, for all $x_0 \in {\rm Zeros}(p)$, there exist a linear polynomial matrix $A \in (K[\xx])^{n \times n}$ and $r \in [0,n-1]$
such that $p_{x_0}(x) = \det(A(x) + \Lambda^r_n)$.
\begin{proof}
From the definition of the determinantal complexity, there exists an affine polynomial matrix 
$Q\in (K[\xx])^{n \times n}$ such that $p = {\det} ( Q)$. 
In particular, given any $x_0 \in {\rm Zeros}(p)$ it follows that $p_{x_0}(x) = {\det} (Q(x + x_0))$.
Observe that $\det (Q(x_0)) = p_{x_0}(0) = 0$ and $\rank(Q(x_0)) \leq n-1$. 
Let $r\in [0,n-1]$ be the rank of $Q(x_0)$. 
Then there exist nonsingular matrices $S,T$ of size $n$ such that $S Q(x_0) T = \Lambda_n^r$ and $\det(ST)=1$.
Since $Q$ is an affine polynomial matrix, we can represent $Q(x+x_0) = L(x) + Q(x_0)$ where $L\in (K[\xx])^{n \times n}$ is a linear polynomial matrix.
We have 
\[
{\det}(Q(x +x_0))= {\det}(L(x) + Q(x_0)) = {\det} (S(L(x) + Q(x_0))T) = {\det}(S L(x)T + \Lambda_n^r).
\]
Since $S L(x)T$ is also a linear polynomial matrix, we define $A(x) := S L(x)T$, and then the statement follows.
\end{proof}
\end{lem}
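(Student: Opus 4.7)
The plan is to start from the definition of determinantal complexity: since $\dc(p) = n$, there exists an affine polynomial matrix $Q \in (K[\xx])^{n \times n}$ with $p = \det(Q)$. Translating, $p_{x_0}(x) = \det(Q(x+x_0))$. Because $x_0 \in {\rm Zeros}(p)$, the constant-term matrix $Q(x_0)$ satisfies $\det(Q(x_0)) = p(x_0) = 0$, so it is singular of some rank $r \leq n-1$.

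Next I would split $Q(x+x_0) = L(x) + Q(x_0)$, where $L$ is a linear polynomial matrix (the translation by the constant $x_0$ leaves the linear part linear and absorbs all constants into $Q(x_0)$). Standard linear algebra (Gaussian elimination over $K$) produces nonsingular matrices $S, T \in \mathrm{Mat}_n(K)$ with $S Q(x_0) T = \Lambda_n^r$. By multiplicativity of determinants,
\[
p_{x_0}(x) = \det(L(x) + Q(x_0)) = \det(ST)^{-1} \det\!\bigl(S L(x) T + \Lambda_n^r\bigr),
\]
so defining $A(x) := S L(x) T$ (again a linear polynomial matrix) gives the desired shape up to the scalar $\det(ST)^{-1}$.

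The one nuisance is arranging $\det(ST) = 1$ so that this scalar disappears. The key observation is that $r \leq n-1$, so $\Lambda_n^r$ has at least one zero row (its first row). Hence, multiplying the first row of $S$ by any nonzero scalar $\lambda$ leaves $S Q(x_0) T = \Lambda_n^r$ unchanged while scaling $\det(S)$ by $\lambda$. Choosing $\lambda = 1/\det(ST)$ normalizes $\det(ST) = 1$, completing the proof. This renormalization step is the only subtle part; the rest is direct unwinding of definitions and the $\det(SXT) = \det(X)$ identity that $\det(ST) = 1$ then provides.
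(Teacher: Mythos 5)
Your proof is correct and follows essentially the same route as the paper's: translate by $x_0$, note $Q(x_0)$ is singular of rank $r\le n-1$, normalize $Q(x_0)$ to $\Lambda_n^r$ by nonsingular $S,T$, and absorb everything into $A(x):=SL(x)T$. Your explicit renormalization of $\det(ST)$ via scaling a zero row of $\Lambda_n^r$ fills in a detail the paper merely asserts, and it is valid.
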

Given a linear polynomial matrix $A \in (K[\xx])^{n \times n}$, $k \in \NN$ and $r \in [0,n-1]$,
we define 
\[
p_{A,k,r}(x) := ({\det}( A(x) + \Lambda^r_n))^{(k)}.
\]
Then it holds that 
${\det}( A(x) + \Lambda^r_n) = \sum_{k=n-r}^n p_{A,k,r}(x)$.
The next statement is the essence of our result.
\begin{prop}\label{prop_ABP_p_n-1}
	For $A \in (K[\xx])^{n \times n}$,
	it holds that $\brank(p_{A,2k,n-1}) \leq n + 2(k-1) D^{k-1}$.
\end{prop}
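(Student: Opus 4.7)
The plan is to decompose $p_{A,2k,n-1}$ explicitly as a sum of at most $n+2(k-1)D^{k-1}$ products of pairs of homogeneous degree-$k$ polynomials, via a Schur-complement of the normal form $A+\Lambda_n^{n-1}$ with respect to its $(1,1)$ entry. Writing $A=\begin{pmatrix}A_{1,1}&\mathbf{b}^T\\ \mathbf{a}&A'\end{pmatrix}$ with $A'=A[[2,n],[2,n]]$, $\mathbf{a}=(A_{i,1})_{i\in[2,n]}$, and $\mathbf{b}=(A_{1,j})_{j\in[2,n]}$, and noting that $\Lambda_n^{n-1}$ vanishes at $(1,1)$ but equals $I_{n-1}$ on the $[2,n]$-block, the block-determinant formula gives $\det(A+\Lambda_n^{n-1})=A_{1,1}\det(A'+I_{n-1})-\mathbf{b}^T\mathrm{adj}(A'+I_{n-1})\mathbf{a}$. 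Introducing $c_j(A'):=\sum_{T\subseteq[2,n],\,|T|=j}\det(A'_T)$ (the sum of $j\times j$ principal minors of $A'$, so that $c_j(A')=[\det(A'+I_{n-1})]^{(j)}$) and $M_j:=[\mathrm{adj}(A'+I_{n-1})]^{(j)}$, the degree-$2k$ component reads
\[
p_{A,2k,n-1}\;=\;A_{1,1}\,c_{2k-1}(A')\;-\;\mathbf{b}^T M_{2k-2}\,\mathbf{a}.
\]

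From the identity $(A'+I_{n-1})\,\mathrm{adj}(A'+I_{n-1})=\det(A'+I_{n-1})\,I_{n-1}$, matching homogeneous components yields the recurrence $M_j=c_j(A')\,I_{n-1}-A'\,M_{j-1}$ with $M_0=I_{n-1}$. Iterating this recurrence on $\mathbf{b}^T M_{2k-2}\mathbf{a}$ exactly $k-1$ times produces the decomposition
\[
\mathbf{b}^T M_{2k-2}\mathbf{a}\;=\;(-1)^{k-1}\,\mathbf{b}^T(A')^{k-1}M_{k-1}\,\mathbf{a}\;+\;\sum_{s=0}^{k-2}(-1)^s\,c_{2k-2-s}(A')\,\mathbf{b}^T(A')^s\mathbf{a}.
\]
The remaining task is then to bound the bi-polynomial rank of each of the three types of pieces.

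The ``main'' piece $\mathbf{b}^T(A')^{k-1}M_{k-1}\mathbf{a}$ factors as $\sum_{v=2}^{n}\bigl(\mathbf{b}^T(A')^{k-1}\bigr)_v\bigl(M_{k-1}\mathbf{a}\bigr)_v$; both factors are homogeneous of degree $k$, contributing at most $n-1$ products. For the piece $A_{1,1}\,c_{2k-1}(A')$, I would expand $c_{2k-1}(A')$ in the degree-$(k-1)$ monomial basis as $\sum_{|I|=k-1}x^I q_I$ with $q_I\in K[x]^{(k)}$; then $A_{1,1}c_{2k-1}(A')=\sum_I (A_{1,1}x^I)q_I$ uses at most $s_{k-1}\le D^{k-1}$ products of pairs of degree-$k$ polynomials. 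For each correction term $c_{2k-2-s}(A')\cdot\mathbf{b}^T(A')^s\mathbf{a}$ with $s\in\{0,\ldots,k-2\}$, a similar expansion $c_{2k-2-s}(A')=\sum_{|J|=k-2-s}x^J r_J$ (with $r_J\in K[x]^{(k)}$) followed by regrouping as $\sum_J\bigl(x^J\cdot\mathbf{b}^T(A')^s\mathbf{a}\bigr)\,r_J$ gives pairs of degree-$k$ factors and at most $s_{k-2-s}\le D^{k-2-s}\le D^{k-1}$ products per $s$, hence at most $(k-1)D^{k-1}$ products in total. These monomial-basis expansions of $c_j(A')$ are precisely the ``exhaustive construction of low-degree terms'' mentioned in the introduction.

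Summing the three contributions yields $\brank(p_{A,2k,n-1})\le (n-1)+D^{k-1}+(k-1)D^{k-1}=n-1+kD^{k-1}\le n+2(k-1)D^{k-1}$, which holds for every $k\ge 1$ (in the case $k=1$ the correction sum is empty and the bound collapses to $n$). The main obstacle will be to execute the adjugate recurrence carefully so that its iterate cleanly separates into a main bilinear contribution $\sum_v F_v G_v$ with degree-$k$ factors and a small number of correction terms that each admit a clean degree-$k$ monomial-basis decomposition; once this is done, combining the pieces yields the claimed bound.
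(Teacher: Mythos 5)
Your proof is correct, and it takes a genuinely different route from the paper's. The paper works combinatorially with the clow sequences of Mahajan and Vinay: it first rewrites $p_{A,2k,n-1}$ as a signed sum over clow sequences through vertex $1$ (which requires the sign-cancellation argument of Lemma~\ref{lem_key3}), then splits that sum by the length $t$ of the clow containing vertex $1$; the $t=2k$ part is cut at its middle vertex $v\in[2,n]$ to give $n-1$ bilinear products, and each $t<2k$ part is already a product of a degree-$t$ and a degree-$(2k-t)$ factor, rebalanced to two degree-$k$ factors via the monomial expansion of Lemma~\ref{lem_basic}. Your block-determinant identity $\det(A+\Lambda_n^{n-1})=A_{1,1}\det(A'+I)-\mathbf{b}^{\top}\mathrm{adj}(A'+I)\,\mathbf{a}$ combined with the adjugate recurrence $M_j=c_j(A')I-A'M_{j-1}$ reproduces exactly the same structure --- the term $\sum_{v}\bigl(\mathbf{b}^{\top}(A')^{k-1}\bigr)_v\bigl(M_{k-1}\mathbf{a}\bigr)_v$ plays the role of the length-$2k$ clows split at their midpoint, while $A_{1,1}c_{2k-1}(A')$ and the terms $c_{2k-2-s}(A')\,\mathbf{b}^{\top}(A')^s\mathbf{a}$ play the role of the short-clow terms handled by the monomial basis --- but it replaces the delicate cancellation over clow sequences with repeated vertices by the one-line polynomial identity $(A'+I)\,\mathrm{adj}(A'+I)=\det(A'+I)\,I$, which is self-contained linear algebra and arguably cleaner. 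The degree bookkeeping checks out (each displayed factor is homogeneous of degree $k$, and the $s=k-2$ correction term is already balanced with $s_0=1$), and your final count $(n-1)+D^{k-1}+(k-1)D^{k-1}=n-1+kD^{k-1}$ is in fact slightly sharper than the stated bound for $k\ge 2$, with the $k=1$ degeneration to $\le n$ handled correctly.
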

The proof of Proposition~\ref{prop_ABP_p_n-1} is given in the next section.
By this proposition, the following statement holds.
\begin{lem}\label{lem_ABP_p}
	(i) For $r \in [n-2k, n-1]$, it holds that $\brank(p_{A,2k,r}) \leq 2^{n - r -1}(n + 2(k-1)D^{k-1})$.
	
	(ii) For $r = n-2k$, it holds that $\brank(p_{A,2k,r}) \leq \binom{2k}{k}$.
\begin{proof}
	(i), let $i := n - r$.
	We prove the statement by the induction on $i$.
	The case $i=1$ directly follows from Proposition~\ref{prop_ABP_p_n-1}.
	Suppose that for $i \leq 2k-1$, the statement holds.
	Denote by $A' \in K[x]^{n-1 \times n-1}$ the linear polynomial matrix obtained by deleting the $(i+1)$th row and column of $A$.
	Since determinant is a bi-linear form, we have
	\begin{align*}
		\det(A(x) + \Lambda_n^{n-i}) = \det(A(x)+\Lambda_n^{n-(i+1)}) + \det(A'(x) + \Lambda_{n-1}^{(n-1)-i}),
	\end{align*}
	and therefore $p_{A,2k,n-(i+1)} = p_{A,2k,n-i} - p_{A', 2k, (n-1) - i}$.
	By inductive hypothesis, we have
	\begin{align*}
	&\brank(p_{A,2k,n-(i+1)}) \leq \brank(p_{A,2k,n-i}) + \brank(p_{A',2k,(n-1) - i}) \\
	&\leq 2^{i -1}(n + 2(k-1)D^{k-1}) + 2^{i-1}((n-1) + 2(k-1)D^{k-1}) \leq 2^{i}(n + 2(k-1)D^{k-1}).
	\end{align*}
	Therefore the statement (i) holds.
	
	(ii) We have
	\[
		p_{A,2k,n-2k}(x) = ({\det} (A(x) + \Lambda_n^{n-2k}))^{(2k)} = \det(A_{2k}(x)),
	\] 
	where $A_{2k}$ is the leading principal submatrix of $A$ of size $2k$.
	Given $I \subseteq [1,2k]$ with $|I| = k$,
	we denote by $B_I$ the square submatrix of $A_{2k}$ consisting of rows and columns corresponding to indices $I$ and $[1,k]$, respectively.
	Also, we denote by $B_{\bar{I}}(x)$ the square submatrix of $A_{2k}(x)$ 
	with row and column indices $[1,2k] \setminus I$ and $[k+1,2k]$, respectively. 
	Then the following is an elementary formula for determinant.
	\[
	{\det}(A_{2k}) = \sum_{I \subseteq  [1,2k] : |I|=k} \sign(I) {\det} (B_I )\cdot {\det}(B_{\bar{I}}),
	\]
	where $\sign(I) \in \{1, -1\}$.
	Since ${\det}(B_I ), {\det}(B_{\bar{I}}) \in K[x]^{(k)}$ for all $I\subseteq  [1,2k]$ with $|I|=k$, we have $\brank(p_{A,2k,r}) \leq \binom{2k}{k}$.
\end{proof}
\end{lem}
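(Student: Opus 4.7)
The plan is to prove the two parts by different techniques. Part (i) naturally invites induction on $i := n - r$, with Proposition~\ref{prop_ABP_p_n-1} supplying the base case $i = 1$. Part (ii) is the extreme case $r = n - 2k$, where $p_{A, 2k, n-2k}$ collapses to a single $2k \times 2k$ principal determinant and a Laplace expansion does the rest.

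For part (i), the key step is to derive a telescoping identity. Since $\Lambda_n^{n-i}$ and $\Lambda_n^{n-(i+1)}$ differ only by the unit vector $e_{i+1}$ in their $(i+1)$-th column, multilinearity of the determinant in that column yields
\[
\det(A + \Lambda_n^{n-i}) = \det(A + \Lambda_n^{n-(i+1)}) + \det(A^{(i+1,i+1)} + \Lambda_{n-1}^{(n-1)-i}),
\]
where $A^{(i+1,i+1)}$ denotes $A$ with row/column $i+1$ deleted, and the second term arises because expansion along the column $e_{i+1}$ kills every entry but the diagonal cofactor. Extracting degree-$2k$ parts and rearranging gives
\[
p_{A, 2k, n-(i+1)} = p_{A, 2k, n-i} - p_{A^{(i+1,i+1)}, 2k, (n-1)-i}.
\]
Subadditivity $\brank(f - g) \leq \brank(f) + \brank(g)$ follows directly from Definition~\ref{df_brank}, and plugging the inductive hypothesis into the identity above doubles the bound at each step, producing the $2^{n-r-1}$ factor.

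For part (ii), a direct inspection shows $p_{A, 2k, n-2k}(x) = \det(A_{2k}(x))$, where $A_{2k}$ is the leading $2k \times 2k$ principal submatrix of $A$: any principal submatrix $A_S$ contributing to the degree-$2k$ part must satisfy $|S| = 2k$ and $S \supseteq \{1, \dots, 2k\}$ (the zero-diagonal positions of $\Lambda_n^{n-2k}$), which forces $S = \{1, \dots, 2k\}$. Then the classical Laplace expansion of $\det(A_{2k})$ splitting the columns into the first $k$ and the last $k$ writes it as a signed sum indexed by $k$-subsets $T \subseteq [1, 2k]$ of products $\det(B_T) \cdot \det(B_{\bar T})$ of two $k \times k$ minors, each of which lies in $K[x]^{(k)}$. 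This immediately certifies $\brank(p_{A, 2k, n-2k}) \leq \binom{2k}{k}$.

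The main obstacle is already absorbed into Proposition~\ref{prop_ABP_p_n-1}: the telescoping identity in part (i) inevitably doubles the bound at each step, so the strength of the statement rests entirely on having a sharp, linear-in-$n$ base case. Conditional on that black box, the remaining arguments are combinatorial and essentially routine.
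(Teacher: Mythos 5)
Your proof is correct and follows essentially the same route as the paper: the same induction on $i=n-r$ with the base case delegated to Proposition~\ref{prop_ABP_p_n-1}, the same multilinearity/telescoping identity producing $p_{A,2k,n-(i+1)} = p_{A,2k,n-i} - p_{A',2k,(n-1)-i}$ and the factor-of-two growth, and the same Laplace expansion of $\det(A_{2k})$ into $\binom{2k}{k}$ products of $k\times k$ minors for part (ii). Your justifications of the column-expansion identity and of why the degree-$2k$ part collapses to the leading $2k\times 2k$ principal minor are slightly more explicit than the paper's, but the argument is the same.
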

We give a proof of Theorem~\ref{thm_main_dc}.
\begin{proof}[Proof of Theorem~\ref{thm_main_dc}]
	Suppose that $\dc(p) = n$.
	By Lemma~\ref{lem_key1}, for all $x_0 \in {\rm Zeros}(p)$, there exist 
	a linear polynomial matrix $A \in (K[\xx])^{n \times n}$ and $r \in [0,n-1]$
	such that $p_{x_0}(x) = \det(A(x) + \Lambda^r_n)$.
	Then $p_{x_0}^{(2k)}(x) = (\det(A(x) + \Lambda^r_n))^{(2k)} = p_{A,2k,r}$.	
	If $r < n-2k$ or $n=1$, then $p_{A,2k,r} = 0$ and $\brank(p_{x_0}^{(2k)}) = 0$.
	The statement is trivial in this case,
	and we assume that $r \in [n-2k , n-1]$ and $n \geq 2$.
	By Lemma~\ref{lem_ABP_p} it holds that
	\[
		\brank(p_{A,2k,r}) \leq \max\{ 2^{2k -2}(n + 2(k-1)D^{k-1}), \binom{2k}{k}  \} 
		\leq 2^{2k - 2}(n + 2(k-1)D^{k-1}),
	\]
	since $k \leq D$, $n \geq 2$, and $\binom{2k}{k} \leq (2k)^k \leq  2^{2k-2}(n + 2(k-1)D^{k-1})$.
	Therefore it holds that 
	\[
	\frac{1 }{2^{2k-2}}\brank(p_{x_0}^{(2k)}) - 2(k-1) D^{k-1} \leq n = \dc(p).
	\]
\end{proof}
\subsection*{Proof of Proposition~\ref{prop_ABP_p_n-1}}
%
We denote $p_{A,k,n-1} := ({\det}( A(x) + \Lambda^{n-1}_n))^{(k)}$ by $p_{A,k}$ for notational simplicity.
Let us define the following notions about clows and cycles on a vertex set. The former is a terminology of Mahajan and Vinay~\cite{Mahajan1997Dca}.
\begin{itemize}
\item Let $V_n := [1,n]$, and we call elements of $V_n$ as {\em vertices}. 
\item A {\em clow} (standing for closed walk) on $V_n$ is an ordered tuple of vertices 
$\langle v_1 , v_2,\dots,v_l \rangle$ such that $v_1 < v_i$ for $i=2,\dots,l$.
The vertex $v_1$ is referred to as the {\em head} of $c$, and $l$ is called the {\em length} of $c$.
If all vertices $v_1,v_2,\dots,v_l$ are distinct, the clow is particularly called a {\em cycle}.
Note that $\langle v\rangle$ is a cycle for all $v \in V_n$.
\item Given a clow $c=\langle v_1 , v_2,\dots,v_l \rangle$ and a linear polynomial matrix $A(x) := (a_{i,j}(x)) \in (K[\xx])^{n \times n}$, $a_c \in K[\xx]^{(l)}$ is defined as $a_c := \prod_{i=1}^{l} a_{v_i, v_{i+1}}$, with identification $v_{l+1} = v_1$.
\item A {\em clow sequence} $C=(c_1,c_2,\dots,c_m)$ is an ordered tuple of clows $c_1,\dots,c_m$, where the head of $c_i$ is strictly less than the head of $c_j$ if $i<j$. The {\em size} $\ell(C)$ of $C$ is defined by $\ell(C):=m$.
\item We denote by $\Clow_{n,k}$ the set of clow sequences $C$ on $V_n$ such that the sum of length of all clows in $C$ is $k$. 
The subset $\cC_{n,k} \subseteq \Clow_{n,k}$ consists of vertex disjoint clow sequences $C'$ where each clow in $C'$ is a cycle. Since every element $C'$ of $\cC_{n,k}$ includes exactly $k$ distinct vertices, we call $C'$ as a {\em cycle $k$-cover}.
\item The {\em sign} of $C \in \Clow_{n,k}$ is defined as $(-1)^{n+ \ell(C)}$. 
\item We denote by $\Clow_{n,k,1} \subseteq \Clow_{n,k}$ the set of clow sequences which includes the vertex $1 \in V_n$. 
Also, $\cC_{n,k,1} \subseteq \Clow_{n,k,1}$ is defined as the set of cycle $k$-covers which include the vertices $1 \in V_n$. 
Given $C \in \Clow_{n,k,1}$, we denote by $c_1$ the unique clow in $C$ which includes the vertex $1$. 
\item Given a linear polynomial matrix $A(x) = (a_{i,j}(x)) \in (K[\xx])^{n \times n}$ and a clow sequence $C= (c_1,c_2,\dots,c_{\ell(C)}) \in \Clow_{n,k}$, 
we define a polynomial $a_C \in K[\xx]^{(k)}$ by $a_C := \prod_{i=1}^{\ell(C)} a_{c_i}$.
\end{itemize}
\begin{lem}\label{lem_key2}
Given a linear polynomial matrix $A \in (K[\xx])^{n \times n}$, it holds that
\begin{align*}
p_{A,k}(x) = (-1)^{n-k} \sum_{C \in \cC_{n,k}} \sign(C) a_C
\end{align*}
\begin{proof}
We expand ${\det}( A(x) + \Lambda^{n-1}_n) = \sum_{i=1}^{n} p_{A,i}(x) $ with respect to $a_{i,j}(x)$.
Given $U$ with $\{1 \} \subseteq U \subseteq V_n$, $A_{U}$ is defined as the principal submatrix of $A$ consisting of rows and columns having indices in $U$.
Since determinant is a multilinear function and $\Lambda^{n-1}_n$ has nonzero entries which is equal to $1$ only on diagonal, it follows that
\begin{align}
{\det}( A(x) + \Lambda^r_n) 
= \sum_{U: \{1 \} \subseteq U \subseteq V_n} {\det } (A_U(x)) = \sum_{k=1}^{n}\sum_{\substack{ U:|U|= k \\ \{1\} \subseteq U \subseteq V_n }} {\det } (A_U(x)) . \label{eq_lem2_1}
\end{align}
For $U \subseteq V_n$, denote by $\cC_{n,U} \subseteq \cC_{n,|U|}$ the set of cycle $|U|$-covers which include all vertices in $U$.
Since elements in $\cC_{n,U}$ consist of vertices in $U$, we can regard them as cycle $|U|$-covers on $U$.
Let $\mathfrak{S}_U$ be the set of permutations on the finite set $U$.
Observe that there is a natural one-to-one correspondence between permutations $\sigma \in \mathfrak{S}_U$ and cycle $|U|$-covers $C \in \cC_{n,U}$, satisfying 
$\sign(\sigma) = (-1)^{n-|U|}\sign(C)$ and
$\prod_{i\in U} a_{i,\sigma(i)} = a_C$.
Therefore from the definition of determinant it follows that 
\begin{align}
{\det } (A_U) = \sum_{\sigma \in \mathfrak{S}_{U}} \sign(\sigma) \prod_{i\in U} a_{i,\sigma(i)} = (-1)^{n-|U|} \sum_{C \in \cC_{n,U}}\sign(C) a_C.
\label{eq_lem2_2}
\end{align}
By (\ref{eq_lem2_1}), (\ref{eq_lem2_2}), we obtain
\begin{align*}
{\det}( A(x) + \Lambda^r_n) 
&= \sum_{k=1}^{n} (-1)^{n-k} \sum_{\substack{U: |U|=k \\ \{1\} \subseteq U \subseteq V_n }}\sum_{C \in \cC_{n,U}}\sign(C) a_C(x) \\
&= \sum_{k=1}^{n} (-1)^{n-k} \sum_{C \in \cC_{n,k, n-r}}\sign(C) a_C(x) = \sum_{k=1}^{n} (-1)^{n-k} p_{A,k,r}(x).
\end{align*}
The second equation holds since $\cC_{n,k,n-r}$ is the disjoint union of $\cC_{n,U}$ over $U$ satisfying $\{1 \} \subseteq U \subseteq V_n$ and $|U|=k$.
Therefore it holds that $p_{A,k} = ({\det}( A(x) + \Lambda^r_n))^{(k)} = (-1)^{n-k} p_{A,k,r}(x)$.
\end{proof}
\end{lem}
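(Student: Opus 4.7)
My plan is to give an explicit bi-polynomial decomposition of $p_{A,2k,n-1}$, organized according to the length $\ell_1$ of the clow containing vertex~$1$ in each contributing clow sequence.

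First, I will start from Lemma~\ref{lem_key2}, which represents $p_{A,2k,n-1}$ as a signed sum over cycle $2k$-covers in $\cC_{n,2k,1}$, and rewrite it as the analogous sum over all clow sequences in $\Clow_{n,2k,1}$ by invoking the standard Mahajan--Vinay cancellation. The crucial observation (and the technically most delicate point) is that the sign-reversing involution on non-cycle clow sequences preserves the set of visited vertices, hence preserves the condition that vertex~$1$ is visited (equivalently, is the head of the first clow).

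Next, I will factor each $C = (c_1, C') \in \Clow_{n,2k,1}$ into its first clow $c_1$ (head~$1$, length $\ell_1 \in [1, 2k]$) and the remainder $C'$, a clow sequence on $[2, n]$ of total length $2k - \ell_1$. Setting $B := A_{[2,n] \times [2,n]}$, $r := (a_{1,j})_{j \geq 2}$, and $c := (a_{j,1})_{j \geq 2}$, direct enumeration of clows with head~$1$ gives
\[
\Delta_{\ell_1}(x) \ := \ \sum_{c_1 : \text{head } 1,\, \ell(c_1) = \ell_1} a_{c_1}(x)
\ = \ \begin{cases} a_{1,1}(x) & \ell_1 = 1,\\ r^\top B^{\ell_1 - 2} c & \ell_1 \geq 2, \end{cases}
\]
and the remaining factors assemble into polynomials $\Psi_{2k-\ell_1}(x)$ of degree $2k - \ell_1$, yielding, up to an overall sign, the identity $p_{A,2k,n-1}(x) = \pm \sum_{\ell_1 = 1}^{2k} \Delta_{\ell_1}(x) \Psi_{2k-\ell_1}(x)$.

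From this identity the two regimes are handled separately. In the long regime $\ell_1 \geq k$, the $\ell_1 = k$ term is already a degree-$k \times$ degree-$k$ product, and for $\ell_1 \in [k+1, 2k]$ the factorization $B^{\ell_1 - 2} = B^{k-1} \cdot B^{\ell_1 - k - 1}$ yields $\Delta_{\ell_1} = \sum_{w \in [2,n]} (r^\top B^{k-1})_w \cdot (B^{\ell_1 - k - 1} c)_w$; absorbing $(B^{\ell_1 - k - 1} c)_w \Psi_{2k-\ell_1}$ (total degree $k$) into a single polynomial $G_w$ collapses the sum into $\sum_{w \in [2,n]} (r^\top B^{k-1})_w G_w(x)$, contributing at most $n - 1 + 1 = n$ products. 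In the short regime $\ell_1 \in [1, k-1]$, since $\Delta_{\ell_1}$ has degree $< k$, I will pick a canonical decomposition $\Psi_{2k-\ell_1}(x) = \sum_{\nu \in \cI_{k-\ell_1}} x^\nu R_\nu(x)$ with $\deg R_\nu = k$ (obtained by assigning each monomial of $\Psi_{2k-\ell_1}$ a fixed divisor $x^\nu$ of degree $k - \ell_1$), so that $\Delta_{\ell_1} \Psi_{2k-\ell_1} = \sum_\nu (\Delta_{\ell_1} x^\nu) R_\nu$, each summand a degree-$k \times$ degree-$k$ product. The count is $\sum_{j=1}^{k-1} \binom{D+j-1}{j} = \binom{D+k-1}{k-1} - 1$, which is bounded by $2(k-1) D^{k-1}$. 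Adding the two regimes gives $\brank(p_{A,2k,n-1}) \leq n + 2(k-1) D^{k-1}$.

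The main obstacle is the Mahajan--Vinay cancellation in the restricted setting: one must verify that the pairing involution commutes with the restriction that the first clow has head~$1$. Everything downstream---the factorization by $\ell_1$, the matrix-power splitting in the long regime, and the monomial regrouping in the short regime---is a routine calculation once the clow-sequence identity is in hand.
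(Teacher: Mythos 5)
There is a fundamental mismatch here: your proposal does not prove the statement in question. The statement is Lemma~\ref{lem_key2}, the cycle-cover expansion of $p_{A,k} = (\det(A(x)+\Lambda_n^{n-1}))^{(k)}$, namely that this degree-$k$ homogeneous part equals $(-1)^{n-k}\sum_{C}\sign(C)\,a_C$ over cycle $k$-covers containing vertex $1$. Your write-up opens with ``I will start from Lemma~\ref{lem_key2}, which represents $p_{A,2k,n-1}$ as a signed sum over cycle $2k$-covers,'' i.e.\ you take the lemma as given, and then you go on to prove a downstream result --- the bound $\brank(p_{A,2k,n-1}) \leq n + 2(k-1)D^{k-1}$, which is Proposition~\ref{prop_ABP_p_n-1} in the paper (together with the content of Lemmas~\ref{lem_key3} and~\ref{lem_q_decomposition}). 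Nothing in your argument establishes the identity that was asked for.

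What the paper actually does to prove Lemma~\ref{lem_key2} is: expand $\det(A(x)+\Lambda_n^{n-1})$ by multilinearity of the determinant in its columns, using the fact that $\Lambda_n^{n-1}$ is diagonal with a single zero in position $(1,1)$, to get $\sum_{U : \{1\}\subseteq U \subseteq V_n}\det(A_U(x))$ over principal submatrices; then identify each permutation $\sigma \in \mathfrak{S}_U$ with a cycle $|U|$-cover $C$ via its cycle decomposition, checking $\sign(\sigma) = (-1)^{n-|U|}\sign(C)$ and $\prod_{i\in U}a_{i,\sigma(i)} = a_C$; finally collect the degree-$k$ terms, which are exactly those with $|U|=k$. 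None of these steps --- the minor expansion forced by the diagonal perturbation, or the permutation/cycle-cover sign bookkeeping --- appears in your proposal. If you want to salvage your text, it belongs as a proof of Proposition~\ref{prop_ABP_p_n-1}; for the present lemma you need to supply the determinant expansion argument itself.
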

The following lemma was essentially given in~\cite[Section 3]{Valiant1992WiB}, and proved in full detail in~\cite{Mahajan1997Dca, Mahajan1999Doa}.
%
%
%
\begin{lem}\label{lem_key3}
It holds that
\begin{align*}
p_{A,k}
= (-1)^{n-k} \sum_{C \in \Clow_{n,k,1}} \sign(C) a_C .
\end{align*}
\begin{proof}
	Since $\cC_{n,k,1} \subseteq \Clow_{n,k,1}$, by the definition of $p_{A,k,1}$ it is enough to show that 
	\begin{align}
	\sum_{C \in \Clow_{n,k,1} \setminus \cC_{n,k,1}} \sign(C) a_C (x)=0.\label{eq_key3}
	\end{align} 
	Observe that $C \in \Clow_{n,k,1}$ is not in $\cC_{n,k,1}$ if and only if $C$ has 
	a repetition of vertices. 
	Suppose that $C=( c_1,c_2,\dots,c_{m} ) \in \Clow_{n,k,1}$ have a repetition of vertices. 
	Let $l \leq m$ be the maximum number such that $( c_{l},\dots,c_{m} )$ has a repetition but $(c_{l+1},\dots,c_{m})$ does not.
	Let $c_l = \langle v_1,v_2,\dots,v_i\rangle$ and $v_j$ be the first element in $c_l$ which is either 
	(1) equal to one of $v_{t}$ where $1 < t < j$, or
	(2) equal to an element in one of $c_{l+1},\dots,c_{m}$.
	Precisely one of them will occur.
	
	In the case (1), let $\langle u_t,u_{t+1},\dots,u_{j-1} \rangle$ be the clow obtained by cyclically reordering the vertex sequence $v_t,v_{t+1}\dots,v_{j-1}$ (these vertices are all distinct and there is a unique minimum element).
	We replace $c_l$ by two clows $\langle v_1,\dots,v_{t-1}, v_j ,\dots,v_i \rangle$ and $\langle u_t,\dots,u_{j-1} \rangle$, which are made by separation around $v_t$.
	
	In the case (2), let $c'= \langle u_1,\dots,u_{s-1}, v_{j}, u_{s+1}, \dots, u_{i'} \rangle$ be the clow including $v_j$.
	We replace $c_l$ and $c'$ by the single clow $\langle v_1,\dots,v_j, u_{s+1},\dots, u_{i'}, u_1 ,\dots,u_{s-1}, v_j , v_{j+1},\dots,v_i \rangle $, which is constructed by the insertion of $c'$ into $c$ around the common vertex $v_t$.
	It can be verified that the procedures (1) and (2) are inverses of each other, which result in a one-to-one correspondence in clow sequences with repetition.
	If $C \in \Clow_{n,k,1} \setminus \cC_{n,k,1}$ is converted to $C'\in \Clow_{n,k,1}\setminus \cC_{n,k,1}$ by the above procedure, 
	then $a_C(x) = a_{C'}(x)$ and $\sign(C) = -\sign(C')$. Thus the equation (\ref{eq_key3}) holds.
\end{proof}
\end{lem}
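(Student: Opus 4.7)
The plan is to combine Lemma~\ref{lem_key2} with a sign-reversing involution. Lemma~\ref{lem_key2} already identifies $p_{A,k}$ with $(-1)^{n-k}\sum_{C\in \cC_{n,k,1}}\sign(C)\,a_C$, after observing that $\Lambda_n^{n-1}$ has its unique zero on the diagonal at position $(1,1)$, so the subsets $U$ contributing in its proof must contain the vertex $1$, and the relevant cycle covers are exactly the ones in $\cC_{n,k,1}$. Since $\cC_{n,k,1}\subseteq \Clow_{n,k,1}$, the lemma reduces to proving the cancellation
\[
\sum_{C\in \Clow_{n,k,1}\setminus \cC_{n,k,1}}\sign(C)\,a_C \;=\; 0.
\]
I would establish this by constructing an involution $\phi$ on the set of ``bad'' (non-cycle) clow sequences rooted at $1$, preserving the monomial $a_C$ while flipping $\sign(C)$.

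To define $\phi$, I first single out a canonical repetition site in each bad $C=(c_1,\dots,c_m)$. Scanning from right to left, let $l$ be the largest index for which $(c_l,c_{l+1},\dots,c_m)$ still contains a repeated vertex; then $c_{l+1},\dots,c_m$ are vertex-disjoint cycles and any offending repetition lies inside $c_l$ or between $c_l$ and a later cycle. Write $c_l=\langle v_1,\dots,v_i\rangle$ and let $v_j$ be the first traversed vertex whose appearance destroys uniqueness: either (case~1) $v_j=v_t$ for some $1<t<j$ within $c_l$, or (case~2) $v_j$ already occurs in some $c_{l'}$ with $l'>l$. Exactly one case occurs. In case~1, I split $c_l$ at the collision into the shorter clow $\langle v_1,\dots,v_{t-1},v_j,v_{j+1},\dots,v_i\rangle$ together with the cycle obtained from $\langle v_t,\dots,v_{j-1}\rangle$ by cyclically rotating its minimum to the head, inserted into the sequence in the correct head order. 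In case~2, I splice the cycle $c_{l'}=\langle u_1,\dots,u_{s-1},v_j,u_{s+1},\dots,u_{i'}\rangle$ (where $u_s=v_j$) into $c_l$ at $v_j$ to form a single longer clow, and delete $c_{l'}$ from the sequence.

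Both operations preserve the multiset of directed edges used, hence $a_{\phi(C)}=a_C$, and change the number of clows by exactly one, hence $\sign(\phi(C))=-\sign(C)$. The main obstacle is verifying that $\phi$ is a genuine involution, namely that case~1 applied to $C$ produces a sequence on which the canonical scan returns case~2 at the same vertex (and conversely), with the clows strictly past position $l$ left undisturbed. The key invariants to check are that the newly created cycle in case~1 has head strictly between the heads of $c_l$ and $c_{l+1}$ (so it lands at the correct position), that the merged clow in case~2 still has $v_1$ as its head (which follows from $v_1<u_1$, a consequence of the strict head ordering along the clow sequence), and that the canonical choice of $(l,j)$ is recovered by re-scanning $\phi(C)$. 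Once these bookkeeping invariants are settled, $\phi$ partitions $\Clow_{n,k,1}\setminus \cC_{n,k,1}$ into sign-opposite monomial-equal pairs, the displayed sum vanishes, and the identity of Lemma~\ref{lem_key3} follows from Lemma~\ref{lem_key2}.
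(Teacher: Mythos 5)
Your proposal follows essentially the same route as the paper: reduce to showing the sum over non-cycle-cover clow sequences vanishes, then apply the Mahajan--Vinay sign-reversing involution defined by locating a canonical repetition site and either splitting a clow at an internal collision or splicing in a later clow at a shared vertex. The paper's proof is exactly this construction (it likewise defers the bookkeeping of the involution property to verification), so your argument matches it in substance.
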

The number of polynomials needed to span $K[\xx]^{(k)}$ as a $K$-vector space is $\binom{D+k-1}{k}$, and we define $s_k := \binom{D+k-1}{k}$.
The following is a general property of polynomials.
\begin{lem}\label{lem_basic}
	For $p \in K[\xx]^{(k)}$ and $m \leq k$,
	there exist
	$2s_{m}$ polynomials $f_1,f_2,\dots,f_{s_{m}} \in K[\xx]^{(m)}$ and $g_1,g_2,\dots,g_{s_{m}} \in K[\xx]^{(k-m)}$ such that $p = \sum_{i=1}^{s_{m}} f_i g_i$.
	\begin{proof}
		Let $P_{m} := \{x_1^{j_1}x_2^{j_2}\cdots x_D^{j_D} \mid j_1,j_2,\dots,j_D \in \ZZ_+,
		\ \sum_{l=1}^{D} j_l = m \}$ be the set of monomials with degree $m$. Take distinct $f_i \in P_{m}$ for $i=1,2,\dots,s_{m}$. Since each term in $p$ is divisible by at least one of $f_i$, we can obtain a decomposition $p = \sum_{i=1}^{s_m} f_i g_i$.
	\end{proof}
\end{lem}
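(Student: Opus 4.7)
The plan is to give an explicit decomposition by taking the $f_i$'s to be all the monomials of degree $m$ in the variables $x_1, \ldots, x_D$, and then gathering the terms of $p$ according to which $f_i$ divides them. The key counting fact is that the number of such monomials is exactly $s_m = \binom{D+m-1}{m}$ by a standard stars-and-bars argument, which matches the bound in the statement.

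Concretely, I would first enumerate the set $P_m := \{x^J : J \in \cI_m\}$ of monic monomials of degree $m$ as $f_1, f_2, \ldots, f_{s_m}$. Next, I would write $p$ in its monomial expansion $p = \sum_{I \in \cI_k} c_I\, x^I$ with $c_I \in K$. For each $I \in \cI_k$, since $m \leq k$, the componentwise inequality $J \leq I$ admits at least one solution $J \in \cI_m$ (for instance, subtract $k - m$ units greedily from $I$ starting from the largest-indexed variable). Choose such a $J(I) \in \cI_m$ once and for all for each $I$, and define
\[
g_i := \sum_{I \in \cI_k : x^{J(I)} = f_i} c_I\, x^{I - J(I)} \in K[\xx]^{(k-m)}.
\]
Then by construction
\[
\sum_{i=1}^{s_m} f_i\, g_i = \sum_{i=1}^{s_m} \sum_{I : x^{J(I)} = f_i} c_I\, x^{J(I)}\, x^{I - J(I)} = \sum_{I \in \cI_k} c_I\, x^I = p,
\]
and each $g_i$ is homogeneous of degree $k-m$ (possibly zero), yielding the desired decomposition into $2 s_m$ polynomials.

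There is essentially no obstacle here: the only thing to verify is that the assignment $I \mapsto J(I)$ exists, and this is immediate from $m \leq k$ together with the fact that $|I| = k$, which guarantees a componentwise sub-multi-index of weight $m$. The lemma is purely combinatorial and follows from the observation that every monomial of degree $k$ factors as a monomial of degree $m$ times a monomial of degree $k - m$.
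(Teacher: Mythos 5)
Your proposal is correct and is essentially the paper's own argument: both take the $f_i$ to be the $s_m$ monomials of degree $m$ and group the terms of $p$ according to a chosen degree-$m$ monomial divisor. You simply spell out the choice of the divisor $J(I)$ and the resulting $g_i$ more explicitly than the paper does.
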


For $t=1,2,\dots,2k$, let $q_{A,2k,t} \in K[\xx]^{(2k)}$ be defined by
\[
q_{A,2k,t} := \sum_{\substack{C \in \Clow_{n,2k,1} \\ |c_1| = t}} \sign(C) a_C ,
\]
where $c_1$ is the unique clow in $C$ including the vertex $1$, and $|c_1|$ is the length of $c_1$.
Then it holds that 
$p_{A,2k} = (-1)^{n-2k} \sum_{t=1}^{2k} q_{A,2k,t}$.
Let $\sF_t$ be the set of clows with length $t$ which include the vertex $1 \in V_n$.

\begin{lem}\label{lem_q_decomposition}
	(i) For $t=2k$, it holds that $\brank(q_{A,2k,2k}) \leq n-1$.
	
	(ii) For $t=1,2,\dots,2k-1$, let $t' := \min\{t,2k-t \}$.
	Then it holds that $\brank(q_{A,2k,t}) \leq s_{k-t'}$.
\begin{proof}
	(i) We have $q_{A,2k,2k} = (-1)^{n+1}\sum_{c \in \sF_{2k}} a_c$.
	For $v =2,3,\dots,n$, let $\sF_{2k,v} \subseteq \sF_{2k}$ be the set of clows whose $(k+1)$th vertices are equal to $v$. 
	Note that the $(k+1)$th vertex of a clow $c \in \sF$ is one of $\{2,3,\dots,n\}$,
	since the head of $c$ is $1$. 
	Then we have $q_{A,2k,2k} = (-1)^{n+1}\sum_{v \in [2,n]} \sum_{c \in \sF_{2k,v}} a_c$.
	Let $\mathcal{R}_{v}$ and $\mathcal{R}_{v}'$ be the sets of $k +1$ vertex sequences $R = (1,u_2,u_3,\dots,  u_{k},v)$ and $R' = (v,u_2',u_3'\dots,  u_{k}', 1)$, respectively, such that $u_2,u_3,\dots, u_{m}$, $u_2',u_3', \dots, u_{k+1}' \in \{2,3,\dots,n \}$. 
	We define $a_R := \prod_{i=1}^k a_{u_i,u_{i+1}}$ for $R = (u_1,u_2,\dots,  u_{k+1})$ in $\mathcal{R}_{v}$ or $\mathcal{R}_{v}'$.
	Then there is a one-to-one correspondence between $\sF_{2k,v}$ and $\mathcal{R}_{v} \times \mathcal{R}_{v}'$ by the following correspondence
	\begin{align*}
	&\sF_{2k,v} \ni \langle 1, u_2,u_3,\dots,  u_{k},v, u_{k+2},u_{k+3},\dots,  u_{2k} \rangle \\ 
	&\mapsto ((1, u_2,u_3,\dots,  u_{k},v), (v, u_{k+2},u_{k+3},\dots,  u_{2k},1)) \in \mathcal{R}_{v} \times \mathcal{R}_{v}'.
	\end{align*}
	Therefore it holds that
	\[
	\sum_{c \in \sF_{k,v}} a_c
	=\left(\sum_{R \in \mathcal{R}_{v}} a_R \right) 
	\left( \sum_{R' \in \mathcal{R}_{v}'} a_{R'}\right).
	\]
	Therefore it holds that $\brank(q_{A,2k,2k}) \leq \sum_{v \in [2,n]} \brank(\sum_{c \in \sF_{k,v}} a_c) \leq n-1$.

	(ii) Let $\sG_{2k-t} \subseteq \Clow_{n,2k-t}$ be the set of clow sequences which do not include the vertex $1 \in V_n$.
	Observe that a clow sequence $C \in \Clow_{n,2k,1}$ with $|c_1| = t$ is uniquely determined by a pair of a clow $c_1 \in \sF_t$ and a clow sequence $C' \in \sG_{k-t}$. Furthermore, both can be chosen independently. Therefore we have
	\[
	q_{A,2k,t} = \left(\sum_{c \in \sF_t} a_c\right)  \left(-\sum_{C' \in \sG_{2k-t}} \sign(C')a_{C'} \right).
	\]
	One of polynomial $(\sum_{c \in \sF_t} a_c)$ and $(-\sum_{C' \in \sG_{2k-t}} \sign(C')a_{C'})$ has degree $t'$, and we denote it by $\alpha_1$. The other is denoted by $\alpha_2$.
	If $t' = k$, we already have the decomposition
	$q_{A,2k,t} = \alpha_1 \alpha_2$, and $\brank(q_{A,2k,t})=1 \leq s_{k-t'}$.
	Otherwise, by Lemma~\ref{lem_basic} there exist $s_{k -t'}$ polynomials
	$\beta_1,\beta_2,\dots, \beta_{s_{k -t'}} \in K[\xx]^{( k -t')}$ and $\gamma_1,\gamma_2,\dots, \gamma_{s_{k -t'}} \in K[\xx]^{( k)}$ such that 
	$\alpha_2 = \sum_{i=1}^{s_{ k -t'}} \beta_i \gamma_i$.
	Then we have
	\[
	q_{A,2k,t} = \sum_{i=1}^{s_{k -t'}} (\alpha_1  \beta_i) \cdot \gamma_i,
	\]
	and $\brank(q_{A,2k,t}) \leq s_{k-t'}$.
\end{proof}	
\end{lem}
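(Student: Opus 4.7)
The plan is to prove both bounds by giving explicit bi-polynomial decompositions of $q_{A,2k,t}$ obtained from natural combinatorial splittings of the clow sequences indexed in its definition.

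For part (i), every clow sequence $C \in \Clow_{n,2k,1}$ with $|c_1| = 2k$ consists of a single clow of length $2k$ with head $1$, so $q_{A,2k,2k}$ is (up to a global sign) $\sum_{c \in \sF_{2k}} a_c$. My plan is to split each such clow $c = \langle 1, u_2, \ldots, u_{k}, v, u_{k+2}, \ldots, u_{2k} \rangle$ at its $(k{+}1)$-th vertex $v$. Since the head is $1$, this middle vertex must lie in $\{2, \ldots, n\}$, giving a partition $\sF_{2k} = \bigsqcup_{v=2}^{n} \sF_{2k,v}$. For each $v$, a clow in $\sF_{2k,v}$ is uniquely determined by a pair: a length-$k$ walk from $1$ to $v$ (through intermediate vertices in $\{2, \ldots, n\}$, since $1$ is the head) and a length-$k$ walk from $v$ back to $1$ (through the same vertex pool). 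The factorization $a_c = a_R a_{R'}$ corresponding to this split is multiplicative, and the two halves vary independently, so $\sum_{c \in \sF_{2k,v}} a_c$ factors as a product of two degree-$k$ polynomials. Summing over the $n-1$ choices of $v$ yields $\brank(q_{A,2k,2k}) \leq n-1$.

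For part (ii), I would use a different splitting: separate $C$ into the clow $c_1$ containing the vertex $1$ (length $t$) and the remaining clow sequence $C'$, which lies entirely on $V_n \setminus \{1\}$ and has total length $2k-t$. Since $c_1$ and $C'$ are chosen independently over their respective index sets $\sF_t$ and $\sG_{2k-t}$, and since $a_C = a_{c_1} a_{C'}$ and $\sign(C)$ depends only on the total sequence structure (the sign factor can be absorbed into one of the two sums), we get a factorization $q_{A,2k,t} = \alpha_1 \alpha_2$ with $\deg \alpha_1 = t$ and $\deg \alpha_2 = 2k - t$. This is already a product of two polynomials, but not yet in bi-polynomial form since neither factor need have degree $k$. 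Letting $t' = \min\{t, 2k-t\}$, I rename so that $\alpha_1$ has degree $t'$ and $\alpha_2$ has degree $2k - t' \geq k$.

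The rebalancing is the only nontrivial point, and it is handled by Lemma~\ref{lem_basic}. When $t' = k$ both factors already have degree $k$ and the bound $\brank(q_{A,2k,t}) \leq 1 = s_0 = s_{k-t'}$ is immediate. Otherwise $t' < k$, and I apply Lemma~\ref{lem_basic} to $\alpha_2 \in K[x]^{(2k-t')}$ with $m = k - t'$, obtaining $\alpha_2 = \sum_{i=1}^{s_{k-t'}} \beta_i \gamma_i$ with $\deg \beta_i = k - t'$ and $\deg \gamma_i = (2k - t') - (k - t') = k$. Multiplying by $\alpha_1$ gives $q_{A,2k,t} = \sum_{i=1}^{s_{k-t'}} (\alpha_1 \beta_i) \gamma_i$, where $\deg(\alpha_1 \beta_i) = t' + (k - t') = k$. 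This is a valid bi-polynomial decomposition with $s_{k-t'}$ terms, proving the claimed bound. The main pitfall is purely bookkeeping — making sure the two combinatorial splittings (middle-vertex split in part (i), head-clow vs. rest split in part (ii)) really do yield \emph{independent} and \emph{bijective} correspondences, so that the factorizations are genuine products rather than just sums of products.
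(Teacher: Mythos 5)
Your proposal is correct and follows essentially the same route as the paper: part (i) splits each length-$2k$ clow at its $(k{+}1)$-th vertex to get $n-1$ products of degree-$k$ walk sums, and part (ii) factors off the clow through vertex $1$ and rebalances the degrees via Lemma~\ref{lem_basic}, exactly as in the paper's proof. The only nuance you flagged (independence and bijectivity of the two splittings, and absorbing the sign into the factor not containing $c_1$) is handled the same way in the paper.
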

\begin{proof}[Proof of Proposition~\ref{prop_ABP_p_n-1}]
	By definition, $p_{A,2k} = (-1)^{n-2k} \sum_{t=1}^{2k} q_{A,2k,t}$.
	By Lemma~\ref{lem_q_decomposition}, it holds that
	\[
	\brank(p_{A,2k}) \leq\sum_{t=1}^{2k} \brank (q_{A,2k,t}) 
	\leq (n-1) + 2\sum_{t=1}^{k-1}s_t \leq n + 2(k-1).
	\]
\end{proof}

\section*{Acknowledgements}
We are deeply grateful to Hiroshi Hirai, Kyo Nishiyama, Jun Tarui and Takeshi Tokuyama
for helpful comments improving the presentation of this paper.
We are indebted to Susumu Ariki for suggesting a formulation of the bi-polynomial rank,
whereas our original formulation was based on tensors of higher-order differentials,
and was quite complicated.
We appreciate Hiroshi Hirai for pointing out the relation 
between our original approach and the theory of the concave minimization.
This work evolved from discussions of monthly GCT seminars.
We thank all the members (S. Ariki, N. Enomoto, H. Hirai, H. Matsumoto, K. Nishiyama, J. Tarui and T. Tokuyama) of the seminar.
The author is supported by the ELC project (Grant-in-Aid for
Scientific Research on Innovative Areas MEXT Japan),
and is partially supported by KAKENHI(26330023).

\bibliographystyle{fplain}
\bibliography{../../../bibtex/all}

\begin{thebibliography}{10}

\bibitem{Alizadeh1995Ipm}
F.~Alizadeh.
\newblock Interior point methods in semidefinite programming with applications
  to combinatorial optimization.
\newblock {\em SIAM Journal on Optimization}, 5(1):13--51, 1995.

\bibitem{Burgisser1997Alt}
P.~B{\"u}rgisser, M.~Clausen, and M.~A. Shokrollahi.
\newblock {\em Algebraic Complexity Theory}, volume 315 of {\em Grundlehren der
  Mathematischen Wissenschaften [Fundamental Principles of Mathematical
  Sciences]}.
\newblock Springer-Verlag, Berlin, 1997.
\newblock With the collaboration of Thomas Lickteig.

\bibitem{Cai2010Qlb}
J.~Cai, Xi~Chen, and D.~Li.
\newblock Quadratic lower bound for permanent vs. determinant in any
  characteristic.
\newblock {\em Computational Complexity}, 19(1):37--56, 2010.

\bibitem{Harris1995Ag}
J.~Harris.
\newblock {\em Algebraic Geometry}, volume 133 of {\em Graduate Texts in
  Mathematics}.
\newblock Springer-Verlag, New York, 1995.
\newblock A first course, Corrected reprint of the 1992 original.

\bibitem{Kelley1960tcp}
Jr. J.~E.~Kelley.
\newblock The cutting-plane method for solving convex programs.
\newblock {\em Journal of the Society for Industrial and Applied Mathematics},
  8:703--712, 1960.

\bibitem{Landsberg2012Tga}
J.~M. Landsberg.
\newblock {\em Tensors: geometry and applications}, volume 128 of {\em Graduate
  Studies in Mathematics}.
\newblock American Mathematical Society, Providence, RI, 2012.

\bibitem{Landsberg2013Hwd}
J.~M. Landsberg, L.~Manivel, and N.~Ressayre.
\newblock Hypersurfaces with degenerate duals and the geometric complexity
  theory program.
\newblock {\em Commentarii Mathematici Helvetici}, 88(2):469--484, 2013.

\bibitem{Mahajan1997Dca}
M.~Mahajan and V.~Vinay.
\newblock Determinant: combinatorics, algorithms, and complexity.
\newblock {\em Chicago Journal of Theoretical Computer Science}, page Article
  5, 1997.

\bibitem{Mahajan1999Doa}
M.~Mahajan and V.~Vinay.
\newblock Determinant: old algorithms, new insights.
\newblock {\em SIAM Journal on Discrete Mathematics}, 12(4):474--490, 1999.

\bibitem{Mignon2004Aqb}
T.~Mignon and N.~Ressayre.
\newblock A quadratic bound for the determinant and permanent problem.
\newblock {\em International Mathematics Research Notices}, (79):4241--4253,
  2004.

\bibitem{Mulmuley2001GCT1}
K.~D. Mulmuley and M.~Sohoni.
\newblock Geometric complexity theory. {I}. {A}n approach to the {P} vs.\ {NP}
  and related problems.
\newblock {\em SIAM Journal on Computing}, 31(2):496--526, 2001.

\bibitem{Nisan91lowerbounds}
N.~Nisan.
\newblock Lower bounds for non-commutative computation.
\newblock in: {\em Proceedings of the 23rd ACM Symposium on Theory of
  Computing, ACM Press},  410--418, 1991.

\bibitem{Pardalos1986Mfg}
P.~M. Pardalos and J.~B. Rosen.
\newblock Methods for global concave minimization: a bibliographic survey.
\newblock {\em SIAM Review}, 28(3):367--379, 1986.

\bibitem{Shpilka2009ACa}
A.~Shpilka and A.~Yehudayoff.
\newblock Arithmetic circuits: a survey of recent results and open questions.
\newblock {\em Foundations and Trends in Theoretical Computer Science},
  5(3-4):207--388 (2010), 2009.

\bibitem{Tuy1983Ooa}
H.~Tuy.
\newblock On outer approximation methods for solving concave minimization
  problems.
\newblock Technical Report 108, Forschungsschwerpunkt Dynamische Systeme
  Universit{\"a}t Bremen, West Germany, 1983.

\bibitem{Valiant1979Cci}
L.~G. Valiant.
\newblock Completeness classes in algebra.
\newblock in: {\em Conference {R}ecord of the {E}leventh {A}nnual {ACM}
  {S}ymposium on {T}heory of {C}omputing ({A}tlanta, {G}a., 1979)},  249--261.
  ACM, New York, 1979.

\bibitem{Valiant1979Tco}
L.~G. Valiant.
\newblock The complexity of computing the permanent.
\newblock {\em Theoretical Computer Science}, 8(2):189--201, 1979.

\bibitem{Valiant1992WiB}
L.~G. Valiant.
\newblock Why is {B}oolean complexity theory difficult?
\newblock in: {\em Boolean function complexity ({D}urham, 1990)}, volume 169 of
  {\em London Math. Soc. Lecture Note Ser.},  84--94. Cambridge Univ. Press,
  Cambridge, 1992.

\bibitem{Valiant1983Fpc}
L.~G. Valiant, S.~Skyum, S.~Berkowitz, and C.~Rackoff.
\newblock Fast parallel computation of polynomials using few processors.
\newblock {\em SIAM Journal on Computing}, 12(4):641--644, 1983.

\end{thebibliography}

\end{document}